\newcommand{\bra}[1]{\langle#1|}
\newcommand{\ket}[1]{|#1\rangle}
\newtheorem{theorem}{Theorem}[section]
\newtheorem{lemma}[theorem]{Lemma}
\newtheorem{proposition}[theorem]{Proposition}
\newtheorem{cor}[theorem]{Corollary}
\theoremstyle{remark}
\newtheorem{remark}[theorem]{Remark}
\theoremstyle{definition}
\newtheorem{definition}[theorem]{Definition}
\theoremstyle{example}
\newtheorem{example}[theorem]{Example}
\theoremstyle{notation}
\newtheorem{notation}[theorem]{Notation}
\begin{document}
\title{Partial order and a $T_0$-topology in a set of finite quantum systems}            
\author{A. Vourdas}
\affiliation{Department of Computing,\\
University of Bradford, \\
Bradford BD7 1DP, United Kingdom}

\begin{abstract}
A `whole-part' theory is developed for a set of finite quantum systems $\Sigma (n)$
with variables in ${\mathbb Z}(n)$.
The partial order `subsystem' is defined, by embedding various attributes 
of the system $\Sigma (m)$ (quantum states, density matrices, etc) 
into their counterparts in the supersystem $\Sigma (n)$ (for $m|n$).
The compatibility of these embeddings is studied.
The concept of ubiquity is introduced for quantities which fit with this structure. 
It is shown that various entropic quantities  are ubiquitous.
The sets of various quantities become $T_0$-topological spaces with the divisor topology, 
which encapsulates fundamental physical properties.
These sets can be converted into directed-complete partial orders (dcpo),
by adding `top elements'. The continuity of various maps among these sets is studied. 
\end{abstract}

\maketitle

\section{Introduction}
In mathematics, after we define a structure, we study its substructures (e.g, subgroups in group theory, etc). 
In this paper we do something similar for finite quantum systems.

There has been much work in the past few years on various aspects of a finite quantum system $\Sigma (n)$
with variables in ${\mathbb Z}(n)$ (the integers modulo $n$).
Reviews of this work have been presented in \cite{FI1,FI2,FI3,FI4,FI5,FI6}.
In addition to that there has also been much work on multipartite systems (and in this paper 
we are interested in the case where the component systems are finite dimensional), and in particular 
on their classical and quantum correlations (reviewed in \cite{Horo}).

In this paper we discuss a `whole-part' theory in the context of finite quantum systems 
(the term `mereology' is also used in a philosophical context for whole-part type of ideas).
We first introduce a partial order in the set $\{\Sigma (n)\}$ based on the concept `subsystem' 
(and supersystem).
For $m|n$, the basic attributes associated with $\Sigma (m)$ 
(e.g., quantum states, density matrices, observables, etc) are embedded into the corresponding 
attributes in $\Sigma (n)$. An important requirement here is the compatibility between the various 
embeddings, so that we get a self-consistent structure. 
There are many attributes characterizing a quantum system, and we require compatibility between the 
corresponding partial orders.   

Apart from the basic attributes, there are many other quantities used in the description of $\Sigma (m)$ 
(e.g., entropic quantities, Wigner functions, etc) which we characterize as ubiquitous and nonubiquitous.
For an ubiquitous quantity, the calculation of its value for a state in $\Sigma (m)$, 
gives the same answer as the calculation of its value 
for the counterpart (`same') state within any of the supersystems $\Sigma (n)$ (where $m|n$). 
A nonubiquitous quantity has local use within $\Sigma (m)$, and its calculation (for the counterpart state)
within any of the supersystems $\Sigma (n)$, gives a different result.
Ubiquitous quantities fit with the poset structure of regarding smaller systems as subsystems of larger ones,
while  nonubiquitous quantities do not fit to that scheme. 
We prove that various entropic quantities and also various quantities in phase space
(e.g., Wigner and Weyl functions) are ubiquitous.

After we make precise these concepts,  
it is natural to explore if there is continuity of a quantity (e.g., entropy) 
as a function of the dimension of the system $n$.
A prerequisite for any discussion of continuity, is to
introduce topologies in the sets.
There are many topologies that can be defined on a given set, and for physical reasons we use the 
divisor topology (e.g.,\cite{counterexamples,top1,top2}), where
an open (resp. closed) set contains the quantity in some systems and all their subsystems 
(resp. all their supersystems). This is a $T_0$-topology.

There is a special family of partial orders which is very useful. They are the
directed-complete partial orders (dcpo), and they play an important role  in theoretical 
computer science \cite{D1,D2,D3}. The set $\{\Sigma (n)\}$ is not a dcpo, but by
adding suitable `top elements' we can convert it to a dcpo.
This links finite quantum systems with theoretical computer science and quantum computing.

In section 2, we discuss the divisor topology. In section 3, we discuss Heisenberg-Weyl groups and 
symplectic transformations (in a group theoretical context).
In section 4, we derive some results on matrices which are used in the proof of some propositions later.
In section 5, we discuss briefly the quantum formalism on a finite quantum system $\Sigma (n)$
with variables in ${\mathbb Z}(n)$. In section 6, we define the concepts of subsystem and supersystem
by embedding several attributes of $\Sigma (m)$ into their counterparts in $\Sigma (n)$.
We also study
the compatibility of these formalisms. Then the set of all systems $\Sigma (n)$ becomes a poset 
with the partial order `subsystem'.
In section 7, we introduce the concept of ubiquity
which identifies quantities that fit with the poset structure.
In section 8, we make some of these sets topological spaces with the divisor topology, and discuss 
the physical meaning of the topology.
In section 9 we extend these ideas to bipartite systems.
In section 10, we add `top elements' to the set $\{\Sigma (n)\}$ so that it becomes a dcpo.
We conclude in section 11, with a discussion of our results. 
Throughout the paper, we discuss in detail the physical meaning of the mathematical formalism.

\section{Divisor topology}

\begin{notation}
\mbox{}
\begin{itemize}
\item[(1)]
${\mathbb R}$ denotes the set of real numbers;
${\mathbb R}_0^+$ the non-negative real numbers;
${\mathbb Z}$ the integers; ${\mathbb Z}^+$ the positive integers; ${\mathbb Z}_0^+$ the non-negative integers;
and $\Pi$ the prime numbers.
\item[(2)]
${\rm GCD}(r,s)$ and ${\rm LCM}(r,s)$ are the greatest common divisor and least common multiplier correspondingly, of the integers $r,s$.
\item[(3)]
${\mathbb Z}(n)$ is the ring of integers modulo $n$. Also
\begin{equation}
\omega _n(\alpha)=\exp \left (i\frac {2\pi \alpha}{n}\right );\;\;\;\;\alpha \in {\mathbb Z}(n)
\end{equation}
${\mathbb Z}^*(n)$ is the reduced system of residues modulo $n$, and contains the units (invertible elements) 
of ${\mathbb Z}(n)$.
Its cardinality is given by the Euler function $\varphi(n)$.
\item[(4)]
If $F,G$ are groups, $F\le G$ denotes that $F$ is a subgroup of $G$ (in this case we call $G$ a supergroup of $F$).
\item[(5)]
$r|s$ denotes that $r$ is a divisor of $s$.
We will also use the notation 
$(r_1,r_2)|(s_1,s_2)$ to indicate that $r_1|s_1$ and also $r_2|s_2$.
\end{itemize}
\end{notation}
\begin{definition}
\mbox{}
\begin{itemize}
\item[(1)]
A poset is a set ${\mathbb A}$ with a binary relation $\prec$ such that
\begin{itemize}
\item
$a\prec a$, for all $a\in {\mathbb A}$ (reflexivity)
\item
if $a\prec b$ and $b\prec a$, then $a=b$ (antisymmetry)
\item
if $a\prec b$ and $b\prec c$, then $a\prec c$ (transitivity)
\end{itemize}
\item[(2)]
An element $m\in {\mathbb A}$ is called minimal, if there is no element $a\in {\mathbb A}$ such that $a\prec m$.
In a dual way we define the maximal elements.
\item[(3)]
A lower bound of a subset ${\mathbb B}$ of the poset ${\mathbb A}$, is an element $a\in {\mathbb A}$ such that $a\prec b$ for all $b\in {\mathbb B}$. 
If the set of all lower bounds of ${\mathbb B}$ has a largest element, it is called the infimum of ${\mathbb B}$.
In a dual way we define the upper bounds and the supremum of ${\mathbb B}$.
\item[(4)]
A directed poset is a poset such that
for $a,b\in {\mathbb A}$ there exists $c\in {\mathbb A}$ such that $a\prec b$ and $b\prec c$.
\item[(5)]
A function $f:\;\;{\mathbb A}\;\rightarrow \;{\mathbb B}$, where ${\mathbb A}, {\mathbb B}$ are posets,
is a monotone (order preserving), if $a_1\prec a_2$ implies that $f(a_1)\prec f(a_2)$.
\end{itemize}
\end{definition}
Throughout the paper we have various posets and for simplicity we use the same symbol $\prec$ for all orders
(but for subgroups we use the symbol $\le$).

\subsection{The directed poset ${\mathbb X}$ as a topological space  with the divisor topology}
Let ${\mathbb X}$ be the set
\begin{eqnarray}\label{ty}
{\mathbb X}=\{2,3,...\}
\end{eqnarray}
${\mathbb X}$ is a directed poset with division as partial order (i.e. $m\prec n$ if $m|n$).
Also ${\mathbb X}\times {\mathbb X}$ is a directed poset with division as partial order 
(i.e. $(m_1,m_2)\prec (n_1,n_2)$ if $(m_1,m_2)|(n_1,n_2)$).
\begin{remark}\label{567}
The number $1$ could be included in ${\mathbb X}$, but then the trivial quantum system $\Sigma (1)$ with variables in ${\mathbb Z}(1)=\{0\}$
and one-dimensional Hilbert space $H(1)$, would have to be included in the set of quantum systems $\{\Sigma (n)\}$.
The physical importance of such system with one-dimensional Hilbert space is limited, and
we have chosen to exclude it from the formalism.
\end{remark}
\begin{proposition}\label{q2}
\mbox{}
\begin{itemize}
\item[(1)]
The set of minimal elements in ${\mathbb X}$ is the set of prime numbers.
\item[(2)]
The supremum of any finite subset ${\mathbb A}$ of ${\mathbb X}$ is the least common multiplier of all the elements of  ${\mathbb A}$.
\end{itemize}
Analogous statements can be made for ${\mathbb X}\times {\mathbb X}$.
\end{proposition}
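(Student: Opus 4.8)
The proposition is elementary number theory dressed up in poset language, so I would prove it directly from the definitions. The plan is to handle the two claims for $\mathbb{X}$ separately, and then observe that the statement for $\mathbb{X}\times\mathbb{X}$ follows componentwise.

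For part (1), I want to show that the minimal elements of $(\mathbb{X},|)$ are exactly the primes. First I would recall that $m\in\mathbb{X}$ is minimal iff there is no $a\in\mathbb{X}$ with $a|m$ and $a\neq m$; equivalently, the only divisor of $m$ lying in $\mathbb{X}=\{2,3,\dots\}$ is $m$ itself. If $p$ is prime, its positive divisors are $1$ and $p$, and $1\notin\mathbb{X}$ by construction (Remark~\ref{567}), so $p$ is minimal. Conversely, if $m$ is composite, write $m=ab$ with $1<a<m$; then $a\in\mathbb{X}$ and $a|m$ with $a\neq m$, so $m$ is not minimal. Hence the minimal elements are precisely $\Pi\cap\mathbb{X}=\Pi$. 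The one subtlety worth flagging explicitly is that this is where excluding $1$ from $\mathbb{X}$ matters: in $\{1,2,3,\dots\}$ the only minimal element would be $1$, so Remark~\ref{567} is doing real work here.

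For part (2), let $\mathbb{A}=\{a_1,\dots,a_k\}$ be a finite nonempty subset of $\mathbb{X}$ and let $L=\mathrm{LCM}(a_1,\dots,a_k)$. I would verify the two defining properties of the supremum with respect to the order $|$. Upper bound: by definition of LCM, each $a_i | L$, i.e. $a_i\prec L$; also $L\geq 2$ so $L\in\mathbb{X}$. Least upper bound: if $u\in\mathbb{X}$ satisfies $a_i | u$ for all $i$, then $u$ is a common multiple of the $a_i$, hence $L | u$ by the standard characterization of the least common multiple as the generator of the ideal $\bigcap_i a_i\mathbb{Z}$ (or just: $L | u$ because every prime power dividing some $a_i$ divides $u$, so their product divides $u$). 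Thus $L\prec u$, and $L$ is the supremum. (The empty subset, if one insists on considering it, has no supremum in $\mathbb{X}$ since $\mathbb{X}$ has no least element — consistent with restricting to nonempty $\mathbb{A}$, or one can simply note the proposition is stated for subsets that are implicitly nonempty.)

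Finally, the analogous statements for $\mathbb{X}\times\mathbb{X}$ follow by working coordinatewise: $(m_1,m_2)\prec(n_1,n_2)$ means $m_1|n_1$ and $m_2|n_2$, so $(m_1,m_2)$ is minimal iff each $m_i$ is minimal in $\mathbb{X}$, giving the set $\Pi\times\Pi$; and the supremum of a finite family $\{(a_1^{(j)},a_2^{(j)})\}_j$ is $\big(\mathrm{LCM}_j a_1^{(j)},\,\mathrm{LCM}_j a_2^{(j)}\big)$, since the product order's sup is the pair of component sups. I do not expect any genuine obstacle; the only thing to be careful about is consistently keeping track of the exclusion of $1$ and the restriction to finite (nonempty) subsets, since those are exactly the places where a naive argument could slip.
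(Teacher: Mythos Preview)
Your proof is correct and follows the same direct, elementary approach as the paper. In fact your argument is more complete: the paper's proof of (1) only checks that primes are minimal without explicitly verifying the converse, and its justification of (2) (``all the multiples $Nu$ are upper bounds of ${\mathbb A}$ and $u$ is the lowest of them'') is terser and slightly less precise than your verification that $L$ divides every upper bound.
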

\begin{proof}
\mbox{}
\begin{itemize}
\item[(1)]
If $p$ is a prime number, there is no element in $a\in {\mathbb X}$ such that $a|p$
\item[(2)]
If $u$ is the least common multiplier of all the elements of  ${\mathbb A}$, then all the multiples $Nu$ are upper bounds of $A$ and $u$ is the lowest of them.
\end{itemize}
\end{proof}

\begin{definition}
\mbox{}
\begin{itemize}
\item[(1)]
The topological space $({\mathbb X}, {\mathfrak T}_{\mathbb X})$
is the set ${\mathbb X}$ with the divisor topology ${\mathfrak T}_{\mathbb X}$ generated by the base
\begin{eqnarray}
B_{\mathbb X}=\{\emptyset, {\mathbb X}, {U}_{\mathbb X}(n)\;|\;n=2,3,...\}
\end{eqnarray}
where 
\begin{eqnarray}
U_{\mathbb X}(n)=\{m\in {{\mathbb X}}\;|\;m|n\};\;\;\;\;n=2,3,...
\end{eqnarray}
All unions of elements of $B_{\mathbb X}$ are open sets and they are 
elements of ${\mathfrak T}_{\mathbb X}$.
The closed sets are the complements of the open sets in ${\mathbb X}$, and they are all the 
${\mathbb X}-U_{\mathbb X}(n)$ together with all their intersections.
\item[(2)]
The topological space $({\mathbb X}\times {\mathbb X}, {\mathfrak T}_{{\mathbb X}\times {\mathbb X}})$
is the set ${\mathbb X}\times {\mathbb X}$ with the product (Tychonoff) topology
${\mathfrak T}_{{\mathbb X}\times {\mathbb X}}$ generated by the base 
\begin{eqnarray}
B_{{\mathbb X}\times {\mathbb X}}=\{\emptyset, \;{\mathbb X}\times {\mathbb X}, \;
U_{{\mathbb X}\times {\mathbb X}}(n_1,n_2)\;|\;n_1,n_2=2,3,...\}
\end{eqnarray}
where
\begin{eqnarray}
U_{{\mathbb X}\times {\mathbb X}}(n_1,n_2)=\{(m_1,m_2)\in {{\mathbb X}\times {\mathbb X}}\;|\;
(m_1,m_2)|(n_1,n_2)\}={U}_{\mathbb X}(n_1)\times {U}_{\mathbb X}(n_2).
\end{eqnarray}
\end{itemize}
\end{definition}
\begin{remark}
We note that
\begin{eqnarray}
&&{\rm GCD}(n,m)=1\;\;\rightarrow\;\;U_{\mathbb X}(n) \cap U_{\mathbb X}(m)=\emptyset\nonumber\\
&&{\rm GCD}(n,m)>1\;\;\rightarrow\;\;U_{\mathbb X}(n) \cap U_{\mathbb X}(m)=U_{\mathbb X}({\rm GCD}(n,m))
\end{eqnarray}

\end{remark}
\begin{remark}
In a topological space the intersection of a finite number of open sets is an open set.
Here the restriction to a finite number of open sets is not needed because each point $n$ has a 
smallest neibourhood which is ${U}_{\mathbb X}(n)$. 
Therefore open and closed sets satisfy exactly the same conditions (Alexandrov topology).
\end{remark}
\begin{remark}\label{rema}
A poset ${\mathbb A}=\{a_2,a_3,...\}$ is order isomorpic to ${\mathbb X}$ if the map
\begin{eqnarray}
f:\;{\mathbb X}\;\rightarrow\;{\mathbb A};\;\;\;\;f(n)=a_n
\end{eqnarray}
is a bijection and $a_m\prec a_n$ if and only if $m|n$.
In this case both $f$ and $f^{-1}$ are monotone functions.

We make ${\mathbb A}$ a topological space with topology ${\mathfrak T}_{\mathbb A}$
generated by the base
\begin{eqnarray}
B_{\mathbb A}=\{\emptyset, {\mathbb A}, {U}_{\mathbb A}(a_n)\;|\;n=2,3,...\},
\end{eqnarray}
where
\begin{eqnarray}
U_{\mathbb A}(a_n)=\{a_m\in {\mathbb A}\;|\;a_m\prec a_n\};\;\;\;\;n=2,3,...
\end{eqnarray}
The topological space $({\mathbb A}, {\mathfrak T}_{\mathbb A})$ is homeomorphic to the topological space
$({\mathbb X}, {\mathfrak T}_{\mathbb X})$ (we denote this as (${\mathbb A}, 
{\mathfrak T}_{\mathbb A}) \sim ({\mathbb X}, {\mathfrak T}_{\mathbb X})$).

Similar remark can be made for ${\mathbb A}\times {\mathbb A}$. Throughout the paper we give 
several examples of such maps.
\end{remark}
The following properties are known \cite{counterexamples} and we give them without proof.
We will see later that in our context, they reflect fundamental physical aspects of the relationship 
between a finite system and its subsystems and 
supersystems (section \ref{QQ}).

\begin{proposition}\label{propo}
\mbox{}
\begin{itemize}
\item[(1)]
${\mathbb X}$ is a $T_0$-space (Kolmogorov), but it is not a $T_1$-space.
The same is true for ${\mathbb X}\times {\mathbb X}$.
\item[(2)]
The set $\Pi$ of prime numbers is dense in ${\mathbb X}$. 
The set $\Pi \times \Pi$ is dense in ${\mathbb X}\times {\mathbb X}$.
\item[(3)]
The closure $\overline {\{n\}}$ of $\{n\}$ in ${\mathbb X}$, consists of all multiples of $n$.
The closure of $\{(n_1,n_2)\}$ in ${\mathbb X}\times {\mathbb X}$, 
is $\{(K_1n_1,K_2n_2)\;|\;K_1,K_2\in {\mathbb Z}\}$.
\end{itemize}
\end{proposition}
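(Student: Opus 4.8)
The plan is to verify each of the three parts directly from the definition of the divisor topology, using the base $B_{\mathbb X}=\{\emptyset,{\mathbb X},U_{\mathbb X}(n)\}$ with $U_{\mathbb X}(n)=\{m\in{\mathbb X}:m\mid n\}$, and then transfer everything to ${\mathbb X}\times{\mathbb X}$ via the product-topology identification $U_{{\mathbb X}\times{\mathbb X}}(n_1,n_2)=U_{\mathbb X}(n_1)\times U_{\mathbb X}(n_2)$ already recorded in the excerpt. The key observation underlying all three parts is that $U_{\mathbb X}(n)$ is the \emph{smallest} open set containing $n$ (the minimal neighbourhood noted in the Alexandrov remark), so open sets are exactly the downward-closed sets under division and closed sets are exactly the upward-closed (multiplicatively saturated) sets.

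For part (1), to show $T_0$ I would take two distinct points $m\neq n$ and argue that they cannot both divide each other (antisymmetry of $\mid$ on ${\mathbb X}$), so without loss of generality $m\nmid n$; then $U_{\mathbb X}(n)$ is an open set containing $n$ but not $m$, which is exactly the $T_0$ separation condition. To show it is not $T_1$, I would exhibit a concrete failure: take $n=2$ and $m=4$; every open set containing $4$ must contain a basic set $U_{\mathbb X}(k)$ with $4\mid k$, hence contains $2$, so $2$ cannot be separated from $4$ by an open set avoiding it — equivalently $\{4\}$ is not closed since its closure (part (3)) contains $2$. For ${\mathbb X}\times{\mathbb X}$, $T_0$ follows because a product of $T_0$ spaces is $T_0$ (if $(m_1,m_2)\neq(n_1,n_2)$ they differ in some coordinate and we separate there using $U_{\mathbb X}\times{\mathbb X}$ or ${\mathbb X}\times U_{\mathbb X}$); failure of $T_1$ is inherited from either factor.

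For part (2), density of $\Pi$ means every nonempty open set meets $\Pi$; since every nonempty open set contains some basic $U_{\mathbb X}(n)$ with $n\geq 2$, and every such $n$ has a prime divisor $p\mid n$ with $p\in U_{\mathbb X}(n)\cap\Pi$, we are done. For $\Pi\times\Pi$ in ${\mathbb X}\times{\mathbb X}$: every nonempty open set contains a set of the form $U_{\mathbb X}(n_1)\times U_{\mathbb X}(n_2)$, and picking prime divisors $p_i\mid n_i$ gives $(p_1,p_2)\in\Pi\times\Pi$ inside it; alternatively invoke that a product of dense sets is dense in the product topology together with part (2) for each factor. For part (3), I would compute $\overline{\{n\}}$ as the complement of the largest open set missing $n$; a basic open set $U_{\mathbb X}(k)$ misses $n$ iff $n\nmid k$, so the union of all open sets missing $n$ is $\{k:n\nmid k\}$, whose complement is $\{k\in{\mathbb X}:n\mid k\}$ — the multiples of $n$. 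The same complement-of-union argument in ${\mathbb X}\times{\mathbb X}$, using that $U_{\mathbb X}(k_1)\times U_{\mathbb X}(k_2)$ misses $(n_1,n_2)$ iff $n_1\nmid k_1$ or $n_2\nmid k_2$, yields $\overline{\{(n_1,n_2)\}}=\{(K_1n_1,K_2n_2):K_1,K_2\in{\mathbb Z}\}$ (intersected with ${\mathbb X}\times{\mathbb X}$), matching the stated formula.

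Since the excerpt says these facts are "known" and are quoted without proof, there is no real obstacle here; the only point requiring a little care is making the product-space claims rigorous rather than hand-waved — specifically, that the given base $B_{{\mathbb X}\times{\mathbb X}}$ of rectangles $U_{\mathbb X}(n_1)\times U_{\mathbb X}(n_2)$ really does generate the Tychonoff product topology (true because each factor has a base of the $U_{\mathbb X}(n)$ closed under finite intersections, and in an Alexandrov setting finite products of minimal neighbourhoods are minimal neighbourhoods of points in the product), so that "every nonempty open set contains such a rectangle" is legitimate. With that in hand, parts (2) and (3) for the product reduce cleanly to the one-factor computations.
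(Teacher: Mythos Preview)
Your argument is correct and in fact supplies what the paper omits: the paper simply states that these properties are ``known'' (citing Steen--Seebach) and gives no proof at all, so there is nothing to compare against. Your use of the minimal neighbourhoods $U_{\mathbb X}(n)$ is exactly the right tool, and the computations for $T_0$, density of $\Pi$, and the closure of a point are all sound.

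One small slip to fix in part~(1): after correctly showing that every open set containing $4$ must also contain $2$, you write ``equivalently $\{4\}$ is not closed since its closure (part (3)) contains $2$.'' This has the roles reversed. By part~(3) the closure of $\{4\}$ is the set of multiples of $4$, which does \emph{not} contain $2$; what you have actually shown is that $4\in\overline{\{2\}}$, i.e.\ $\{2\}$ is not closed. Either restate the equivalence with the labels swapped, or simply drop the ``equivalently'' clause, since your direct neighbourhood argument already establishes the failure of $T_1$.
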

\begin{example}\label{315}
In the topological space $({\mathbb X}, {\mathfrak T}_{\mathbb X})$, the
\begin{eqnarray}
U_{\mathbb X}(6)=\{2,3,6\};\;\;\;\;U_{\mathbb X}(9)=\{3,9\};\;\;\;\;U_{\mathbb X}(6)\cup U_{\mathbb X}(9)=\{2,3,6,9\}
\end{eqnarray}
are examples of open sets. Their complements ${\mathbb X}-U_{\mathbb X}(6)$, ${\mathbb X}-U_{\mathbb X}(9)$, ${\mathbb X}-(U_{\mathbb X}(6)\cup U_{\mathbb X}(9))$, are examples of closed sets. 
Another example of a closed set is the $\overline {\{3\}}=\{3,6,9,...\}$. This is the closure of $\{3\}$ because it is the smallest closed set containing $3$.
\end{example}
\begin{remark}
The divisor function $\sigma _k(n)$ is the sum of the $k$-powers of all divisors of $n$ 
(including $1$ and $n$)\cite{AR}:
\begin{eqnarray}
\sigma _k(n)=\sum _{d|n}d^k
\end{eqnarray} 
The cardinality of $U_{\mathbb X}(n)$ is $\sigma _0(n)-1$ (the $1$ is not included in ${\mathbb X}$).
\end{remark}

\subsection{The topological space  of the additive groups ${\mathbb Z}(n)$}

We consider ${\mathbb Z}(n)$ as additive groups.
Then $m|n$ implies that ${\mathbb Z}(m)\le {\mathbb Z}(n)$. 
The embedding ${\mathfrak K}_{mn}$ of ${\mathbb Z}(m)$ into ${\mathbb Z}(n)$ is given by the injection
\begin{eqnarray}\label{9}
{\mathfrak K}_{mn}:\;\; {\mathbb Z}(m)\ni \alpha\;\rightarrow\; d\alpha \in {\mathbb Z}(n);\;\;\;\;d=\frac{n}{m}
;\;\;\;\;m|n.
\end{eqnarray}
It is compatible in the sense that 
if $m|n|\ell$ then
${\mathfrak K}_{n\ell}\circ {\mathfrak K}_{mn}={\mathfrak K}_{m\ell}$.
\begin{remark}\label{zz}
As a ring ${\mathbb Z}(m)$ is not a subring of ${\mathbb Z}(n)$ ($d\alpha$ times $d\beta$ is 
not $d\alpha \beta$).
This is the cause of difficulties below, in embeddings of structures that use multiplication 
like the Heisenberg-Weyl group, Wigner functions, etc. 
\end{remark}

Let ${\mathfrak Z}$ be the directed poset
\begin{eqnarray}
{\mathfrak Z}=\{{\mathbb Z}(2),{\mathbb Z}(3),...\}
\end{eqnarray}
with $\le $ (subgroup) as partial order.
The map
\begin{eqnarray}
f:\;\;{\mathbb X}\;\;\rightarrow\;\;{\mathfrak Z};\;\;\;\;f(n)={\mathbb Z}(n)
\end{eqnarray}
is a bijection and ${\mathbb Z}(m)\le {\mathbb Z}(n)$ if and only if $m|n$. Consequently,
${\mathfrak Z}$ can be viewed as a topological space
with the divisor topology ${\mathfrak T}_{\mathfrak Z})$, as discussed in remark \ref{rema}.
The topological space $({\mathfrak Z}, {\mathfrak T}_{\mathfrak Z})$ is homeomorphic to the topological space
$({\mathbb X}, {\mathfrak T}_{\mathbb X})$.

An open (resp. closed) set in this topology contains some groups and all their subgroups (resp. supergroups).
As an example, we consider the open set
\begin{eqnarray}
U_{\mathfrak Z}(6)\cup U_{\mathfrak Z}(8)=\{{\mathbb Z}(2),{\mathbb Z}(3),{\mathbb Z}(4),{\mathbb Z}(6),{\mathbb Z}(8)\}
\end{eqnarray}
This contains the groups ${\mathbb Z}(6)$ and ${\mathbb Z} (8)$ and also their subgroups
${\mathbb Z}(2)$, ${\mathbb Z}(3)$ and ${\mathbb Z}(4)$. 
Also the closed set $\overline {\{{\mathbb Z}(5)\}}=\{{\mathbb Z}(5),{\mathbb Z}(10),{\mathbb Z}(15),...\}$
which is the closure of $\{{\mathbb Z}(5)\}$ contains all the supergroups of ${\mathbb Z} (5)$.

\subsection{A class of continuous functions}

\begin{proposition}\label{1000}
Let ${\cal N}:\;{\mathbb X}\;\rightarrow \;{\mathbb Y}\subseteq {\mathbb X}$ be a monotone function,
where ${\mathbb X}$ is the topological space defined earlier and ${\mathbb Y}$ is 
a topological space with the induced topology (i.e.,
its open sets are the intersections of ${\mathbb Y}$ with the open sets of ${\mathbb X}$).
Then ${\cal N}$ is a continous function. 
\end{proposition}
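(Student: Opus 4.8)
The plan is to verify continuity directly, by checking that the preimage of each basic open set of ${\mathbb Y}$ is open in ${\mathbb X}$. Since ${\mathbb Y}$ carries the induced topology, a base for its open sets is $\{\emptyset,\;{\mathbb Y}\}$ together with all ${\mathbb Y}\cap U_{\mathbb X}(n)$ for $n=2,3,\dots$, and since preimages commute with unions it suffices to treat a single set ${\mathbb Y}\cap U_{\mathbb X}(n)$ (the preimages of $\emptyset$ and ${\mathbb Y}$ being $\emptyset$ and ${\mathbb X}$, which are open). Because ${\cal N}$ takes values in ${\mathbb Y}$,
\begin{equation}
{\cal N}^{-1}\left ({\mathbb Y}\cap U_{\mathbb X}(n)\right )=\{m\in {\mathbb X}\;|\;{\cal N}(m)|n\}.
\end{equation}

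Next I would record the elementary fact that a subset $S\subseteq {\mathbb X}$ is open in the divisor topology if and only if it is a down-set for divisibility, i.e.\ $m\in S$ and $m'|m$ (with $m'\in {\mathbb X}$) imply $m'\in S$. This is immediate from the observation, already noted in the excerpt, that $U_{\mathbb X}(m)$ is the smallest open neighbourhood of $m$ (the Alexandrov property): a set is a union of basic open sets precisely when it contains $U_{\mathbb X}(m)$ for each of its points. With this in hand the argument is short: if $m$ belongs to the set above, so that ${\cal N}(m)|n$, and if $m'|m$, then monotonicity of ${\cal N}$ gives ${\cal N}(m')|{\cal N}(m)$, and transitivity of divisibility gives ${\cal N}(m')|n$, so $m'$ also belongs to the set. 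Hence ${\cal N}^{-1}({\mathbb Y}\cap U_{\mathbb X}(n))$ is a down-set, therefore open, and ${\cal N}$ is continuous.

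An equivalent route, which I would add as a remark, is a pointwise argument using smallest neighbourhoods: given $m\in {\mathbb X}$ and any open $V\subseteq {\mathbb Y}$ with ${\cal N}(m)\in V$, one has ${\mathbb Y}\cap U_{\mathbb X}({\cal N}(m))\subseteq V$, while monotonicity gives ${\cal N}(U_{\mathbb X}(m))\subseteq {\mathbb Y}\cap U_{\mathbb X}({\cal N}(m))$, so that $U_{\mathbb X}(m)$ is the neighbourhood of $m$ required by the definition of continuity at $m$. There is no real obstacle in either approach; the only points that need a little care are to treat the subspace topology on ${\mathbb Y}$ correctly — keeping the intersection with ${\mathbb Y}$, which is in any case harmless since ${\cal N}$ already maps into ${\mathbb Y}$ — and to invoke the characterisation of open sets as down-sets, which is exactly the Alexandrov feature of the divisor topology stressed earlier in this section.
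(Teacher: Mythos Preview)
Your proof is correct. Your primary argument via preimages of basic open sets, together with the characterisation of open sets in an Alexandrov topology as down-sets, is a clean and standard route that the paper does not take; the paper instead argues pointwise, exhibiting for each $u$ the basic neighbourhood $U_{\mathbb X}(u)$ and using monotonicity to push it into the smallest neighbourhood of ${\cal N}(u)$ in ${\mathbb Y}$. That is precisely the ``equivalent route'' you sketch in your final paragraph, so your remark already subsumes the paper's proof. The preimage approach has the minor advantage of making explicit that only the down-set property of open sets is used, while the pointwise approach highlights the role of smallest neighbourhoods; both are essentially the same observation that monotone maps between Alexandrov spaces are continuous.
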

\begin{proof}
For an arbitrary ${\cal N}(u)\in {\mathbb Y}$, we consider a neighbourhood ${\cal V}$ containing ${\cal N}(u)$. 
Then 
\begin{eqnarray}
U_{\mathbb Y}[{\cal N}(u)]=\{{\cal N}(m)\in {\mathbb Y}|{\cal N}(m)|{\cal N}(u)\}
\end{eqnarray}
is a subset of any open set containing ${\cal N}(u)$, and therefore of ${\cal V}$.
We next consider the open set $U_{\mathbb X}(u)$ which contains the divisors $r$ of $u$.
Since ${\cal N}$ is monotone function ${\cal N}(r)|{\cal N}(u)$ and therefore ${\cal N}(r)\in U_{\mathbb Y}[{\cal N}(u)] \subseteq {\cal V}$.
Therefore for an arbitrary neighbourhood ${\cal V}$ containing ${\cal N}(u)$,
we have found an open set $U_{\mathbb X}(u)$ containing $u$, such that ${\cal N}[U_{\mathbb X}(u)] \subseteq {\cal V}$.
This is true for any ${\cal N}(u)$, and therefore the function ${\cal N}$ is continuous.
\end{proof}

\begin{example}\label{1001}
We give some arithmetical functions \cite{AR} which have been used in the study of 
finite quantum systems.
They are functions from ${\mathbb X}$ to a subset of ${\mathbb X}$, and we use proposition \ref{1000} to 
prove that they are continuous.

The Jordan totient function is defined as:
\begin{eqnarray}\label{V1}
J_k(n)=n^k\prod _{p|n}\left (1-\frac{1}{p^k}\right )
\end{eqnarray}
We note that for $k=1$ this is the Euler totient function $J_1(n)=\varphi(n)$.
In the context of finite quantum systems, this function has been used in \cite{VB}, where 
it has been shown that 
the order of the relevant symplectic group is $|Sp(2,{\mathbb Z}(n))|=nJ_2(n)$,
and this has been used for quantum tomography.

Another related function is the Dedekind psi function
\begin{eqnarray}\label{V2}
\psi(n)=\frac{J_2(n)}{\varphi(n)}
\end{eqnarray}
In the context of finite quantum systems, this function has been used in \cite{SV}, 
where it has been shown that the number ${\cal L}(n)$ of `maximal lines through the origin' in the
${\mathbb Z}(n)\times {\mathbb Z}(n)$ phase space is $\psi (n)$ and this has been used in other calculations.
 
Both the Jordan totient function of Eq.(\ref{V1}) and the Dedekind psi function of Eq.(\ref{V2}) 
are monotone functions and therefore they are both continuous functions.

\end{example}
\begin{remark}
A continuous function from ${\mathbb X}$ to a
Hausdorff topological space ${\Phi}$  is fully determined by its values on a dense subset 
(e.g., \cite{Willard, BOU}). 
According to proposition \ref{propo}, the subset $\Pi$ of prime numbers 
is dense in ${\mathbb X}$, but 
in the cases discussed above ${\mathbb Y}$ is a $T_0$-space which is not 
Hausdorff and therefore this statement is not applicable. 
\end{remark}

\section{Heisenberg-Weyl groups and symplectic groups}\label{HW}

\subsection{Pontryagin duality}

In quantum mechanics when the position takes values in an Abelian group $A$, the momenta take values in its
Pontryagin dual group $B$ (which contains the characters $\chi _b(a)$ where $a\in A$).
We consider a subgroup $A_1$ of $A$.
The annihilator of $A_1$ is a subgroup $B_1$ of $B$, such that for all $a\in A_1$ and all $b\in B_1$,
we get $\chi _b (a)=1$.
The theory of Pontryagin duality (e.g., \cite{Po}) proves that
the Pontryagin dual group of $A_1$ is isomorphic to $B/B_1$, and  
the Pontryagin dual group of $A/A_1$ is isomorphic to $B_1$.

When $A={\mathbb Z}(n)$ then its
Pontryagin dual group is $B\cong {\mathbb Z}(n)$.
The characters in this case are $\chi _b(a)=\omega _n(\alpha \beta)$.
We next consider the subgroup $A_1={\mathbb Z}(m)\le A$, where $m|n$.
In this case the annihilator of $A_1$ is the subgroup $B_1={\mathbb Z}(d)$ of $B$,
where $d=n/m$.
Then the Pontryagin dual group of $A_1={\mathbb Z}(m)$ is isomorphic to 
$B/B_1={\mathbb Z}(n)/{\mathbb Z}(d)\cong {\mathbb Z}(m)$ and  
the Pontryagin dual group of $A/A_1={\mathbb Z}(n)/{\mathbb Z}(m)\cong {\mathbb Z}(d)$ 
is isomorphic to $B_1={\mathbb Z}(d)$.

For calculations in the present context, this means the following.
The elements of the subgroup $A_1={\mathbb Z}(m)$ of $A$ can be written as $\alpha =d\alpha'$
where $\alpha '=0,...,m-1$. 
The elements of $B/B_1={\mathbb Z}(n)/{\mathbb Z}(d)$
are the cosets $0({\rm mod}\;d),...,m-1({\rm mod}\;d)$.
Then
\begin{eqnarray}\label{xx}
\omega _n(\alpha \beta)=\omega _m(\alpha'\beta);\;\;\;\;
\alpha =d\alpha';\;\;\;\;\alpha'=0,...,m-1;\;\;\;\;
\beta =0,...,m-1({\rm mod}\;d)
\end{eqnarray}

\subsection{The Heisenberg-Weyl group $HW[{\mathbb Z}(n)]$}

We consider the matrices
\begin{eqnarray}
D_n(\alpha,\beta,\gamma)=\left(\begin{array}{ccc}
1&-\beta &\gamma \\
0&1&\alpha \\
0&0&1\\
\end{array}
\right );\;\;\;\;\alpha,\beta,\gamma \in {\mathbb Z}(n)
\end{eqnarray}
where
\begin{eqnarray}\label{bn}
D_n(\alpha _1,\beta _1,\gamma _1)D_n(\alpha _2,\beta _2,\gamma _2)=
D_n(\alpha _1+\alpha _2,\beta _1+\beta _2,\gamma _1+\gamma _2-\alpha _2\beta _1)
\end{eqnarray}
These matrices form the $HW[{\mathbb Z}(n)]$ group.

We denote as $HW_1[{\mathbb Z}(n)]$ the subgroup of $HW[{\mathbb Z}(n)]$,
which consists of the $D_n(\alpha,0,0)$
(clearly $HW_1[{\mathbb Z}(n)]\cong {\mathbb Z}(n)$).
We also denote as $HW_2[{\mathbb Z}(n)]$ the subgroup which consists of the $D_n(0,\beta,0)$
(and $HW_2[{\mathbb Z}(n)]\cong {\mathbb Z}(n)$)
For applications to quantum mechanics it is essential that $HW_1[{\mathbb Z}(n)]$ is Pontryagin dual to
$HW_2[{\mathbb Z}(n)]$, 
because this allows one of the groups to be related to displacements in positions and the other to 
displacements in momenta.
And indeed $HW_1[{\mathbb Z}(n)]\cong {\mathbb Z}(n)$ is Pontryagin dual to
$HW_2[{\mathbb Z}(n)] \cong {\mathbb Z}(n)$.

For $m|n$, we have the following embedding
from $HW[{\mathbb Z}(m)]$ to $HW[{\mathbb Z}(n)]$:
\begin{eqnarray}
&&{\mathfrak X}_{mn}:\;{D}_m(\alpha, \beta, \gamma)\;\;\rightarrow\;\;{D}_n(d\alpha, \beta, d\gamma);\;\;\;\;
d=\frac{n}{m}
\end{eqnarray}
Here $\alpha$ is multiplied by $d$ and in this sense it 
appears to be treated differently from $\beta$. This is related to
the fact that $\alpha$ and $\beta$ belong to groups which are Pontryagin dual to each other (although, in the 
special case that we consider, they are isomorphic to each other).
In ${D}_n(d\alpha, \beta, d\gamma)$ the
$d\alpha,d\gamma$ take values in the subgroup ${\mathbb Z}(m)$ 
of ${\mathbb Z}(n)$, and $\beta$ takes values in 
its Pontryagin dual group, which as we explained earlier is
${\mathbb Z}(n)/{\mathbb Z}(d)\cong {\mathbb Z}(m)$.

${\mathfrak X}_{mn}$ maps the 
product of two matrices in $HW[{\mathbb Z}(m)]$, into the product of the corresponding matrices in 
$HW[{\mathbb Z}(n)]$.
These maps are compatible in the sense that if $m|n|\ell$ then
${\mathfrak X}_{n\ell}\circ {\mathfrak X}_{mn}={\mathfrak X}_{m\ell}$.

We next consider the set of the Heisenberg-Weyl groups 
$\mathfrak {HW}=\{HW[{\mathbb Z}(2)], HW[{\mathbb Z}(3)],...\}$.
The map 
\begin{eqnarray}\label{aa}
f:\;{\mathbb X}\;\rightarrow \;\mathfrak {HW};\;\;\;\;f(n)=HW[{\mathbb Z}(n)]
\end{eqnarray}
is a bijection and $ HW[{\mathbb Z}(m)]\le  HW[{\mathbb Z}(n)]$ if and only if $m|n$. Consequently,
$\mathfrak {HW}$ can be viewed as a topological space
with the divisor topology ${\mathfrak T}_{\mathfrak {HW}})$, as discussed in remark \ref{rema}.
The topological space $(\mathfrak {HW}, {\mathfrak T}_{\mathfrak {HW}})$ is homeomorphic to the topological space
$({\mathbb X}, {\mathfrak T}_{\mathbb X})$.

\subsection{The symplectic group $Sp(2,{\mathbb Z}(n))$}

The $Sp(2,{\mathbb Z}(n))$ group consists of matrices of the type
\begin{eqnarray}\label{SY}
s_n(\kappa ,\lambda|\mu, \nu)=\left(\begin{array}{cc}
\kappa& \lambda\\
\mu& \nu\\
\end{array}
\right)
;\;\;\;\;\;\;\kappa \nu-\lambda \mu=1\;({\rm mod}\; n);\;\;\;\;\;
\kappa, \lambda, \mu, \nu \in {\mathbb Z}(n).
\end{eqnarray}
For $m|n$, we have an embedding
from $Sp(2,{\mathbb Z}(m))$ to $Sp(2,{\mathbb Z}(n))$, with
\begin{eqnarray}\label{SY}
&&\mathfrak S_{mn}:\;\;s_m(\kappa ,\lambda|\mu, \nu)\;\rightarrow\;s_n(d\kappa ,\lambda|d\mu, \nu)\nonumber\\
&&\kappa \nu-\lambda \mu=1\;({\rm mod}\; m);\;\;\;\;(d\kappa) \nu-\lambda (d\mu)=1\;({\rm mod}\; n)
\end{eqnarray}
Here $\kappa, \mu$ are multiplied by $d$ and in this sense they 
appear to be treated differently from $\lambda,\nu$. 
In section \ref{NN} we will see that in a quantum mechanical context
$\kappa, \mu$ are related to displacements in positions, and $\lambda,\nu$  are related to displacements in 
momenta.
Therefore $\kappa, \mu$
belong to a group which is Pontryagin dual to the group where
the $\lambda,\nu$ belong (but in our case the two groups are isomorphic to each other).
In $s_n(d\kappa ,\lambda|d\mu, \nu)$ the
$d\kappa, d\mu$ take values in the subgroup ${\mathbb Z}(m)$ 
of ${\mathbb Z}(n)$ and $\lambda, \nu$ takes values in the Pontryagin dual group which as we explained earlier is
${\mathbb Z}(n)/{\mathbb Z}(d)\cong {\mathbb Z}(m)$. 

$\mathfrak S_{mn}$ maps the product of two matrices in $Sp(2,{\mathbb Z}(m))$,
into the product of the corresponding matrices in 
$Sp(2,{\mathbb Z}(n))$. Also these maps are compatible, i.e., for $m|n|\ell$ we get
${\mathfrak S}_{n\ell}\circ {\mathfrak S}_{mn}={\mathfrak S}_{m\ell}$.

The set of the symplectic groups
$\{Sp(2,{\mathbb Z}(2)), Sp(2,{\mathbb Z}(3)),...\}$ can become a topological space with the divisor topology,
in a way analogous to that discussed earlier for the Heisenberg-Weyl groups..

\section{Matrices}

Here we summarize some results on matrices which are used in proofs of various propositions later.

For $m|n$, the permutation $\tau_{n,m}$ of the set $\{0,1,...,n-1\}$ is :
\begin{eqnarray}
&&r=0,...,m-1\;\rightarrow\;\tau _{n,m}(r)=rd;\;\;\;\;\;
d=\frac{n}{m};\;\;\;\;\nonumber\\
&&r=m,...,n-1\;\rightarrow\;\tau _{n,m}(r)=r-m+\left [\frac{r-m}{d}\right ]
\end{eqnarray} 
where $[(r-m)/d]$ denotes the smallest integer which is larger than $(r-m)/d$ (i.e., the integral part of $(r-m)/d$ plus one). 

For $m|n$, we consider the $n\times n$ matrix
\begin{eqnarray}
[A_m]_n\equiv \left(\begin{array}{cc}
A_m&0\\
0& 0\\
\end{array}
\right).
\end{eqnarray}
where $A_m$ is a $m\times m$ matrix and the rest of the elements are equal to zero.
The indices of the $[A_m]_n$ matrix take values from $0$ up to $n-1$.
We use the notation ${\mathfrak I}_{mn}(A_m)$ for the $n\times n$ matrix, 
with $[\tau _{n,m}(r),\tau_{n,m}(s)]$ element equal to the $(r,s)$ element of $[A_m]_n$:
\begin{eqnarray}
{\mathfrak I}_{mn}(A_m)[\tau _{n,m}(r),\tau_{n,m}(s)]=[A_m]_n(r,s)
\end{eqnarray}
The matrix ${\mathfrak I}_{mn}(A_m)$ is related to the matrix $[A_m]_n$, through a permutation 
$\tau$ of both columns and rows.
The matrices ${\mathfrak I}_{mn}(A_m)$ and $A_m$ contain the same information and they have the same rank.
\begin{lemma}\label{AA}
For $m|n$,
\begin{itemize}
\item[(1)]
\begin{eqnarray}\label{67}
{\mathfrak I}_{mn}(A_m){\mathfrak I}_{mn}(B_m)={\mathfrak I}_{mn}(A_mB_m)
\end{eqnarray}
\item[(2)]
The $n$ eigenvalues of ${\mathfrak I}_{mn}(A_m)$ are the $m$ eigenvalues of $A_m$ 
plus $n-m$ zeros. Therefore ${\rm Tr}[{\mathfrak I}_{mn}(A_m)]={\rm Tr}(A_m)$.
\end{itemize}
\end{lemma}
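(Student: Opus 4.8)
The plan is to exploit the fact that the map $A_m \mapsto {\mathfrak I}_{mn}(A_m)$ is, by construction, just a conjugation-by-permutation followed by a zero-padding, and both of these operations are well understood. Concretely, let $P$ be the $n\times n$ permutation matrix associated with $\tau_{n,m}$ acting on indices $\{0,\dots,n-1\}$, so that $(P X P^{-1})[\tau_{n,m}(r),\tau_{n,m}(s)] = X(r,s)$ for any $n\times n$ matrix $X$. Then by the defining relation of ${\mathfrak I}_{mn}$ we have ${\mathfrak I}_{mn}(A_m) = P\,[A_m]_n\,P^{-1}$. This single identity is the workhorse for both parts.

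For part (1), I would write ${\mathfrak I}_{mn}(A_m)\,{\mathfrak I}_{mn}(B_m) = P[A_m]_n P^{-1} P[B_m]_n P^{-1} = P\,[A_m]_n[B_m]_n\,P^{-1}$, so it suffices to check that $[A_m]_n[B_m]_n = [A_mB_m]_n$. That is immediate from block multiplication: the $2\times2$ block product
\begin{eqnarray}
\left(\begin{array}{cc} A_m & 0\\ 0 & 0\end{array}\right)
\left(\begin{array}{cc} B_m & 0\\ 0 & 0\end{array}\right)
=\left(\begin{array}{cc} A_mB_m & 0\\ 0 & 0\end{array}\right)
\end{eqnarray}
gives exactly $[A_mB_m]_n$, and conjugating back by $P$ yields ${\mathfrak I}_{mn}(A_mB_m)$.

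For part (2), the conjugation ${\mathfrak I}_{mn}(A_m) = P[A_m]_n P^{-1}$ shows that ${\mathfrak I}_{mn}(A_m)$ and $[A_m]_n$ are similar, hence have the same characteristic polynomial and the same multiset of eigenvalues. So it remains to identify the eigenvalues of the block matrix $[A_m]_n$. Its characteristic polynomial factors as $\det(xI_n - [A_m]_n) = \det(xI_m - A_m)\cdot x^{\,n-m}$ because of the block-triangular (in fact block-diagonal) structure; thus the eigenvalues of $[A_m]_n$ are precisely the $m$ eigenvalues of $A_m$ together with $n-m$ zeros. Taking the trace (the sum of eigenvalues, or equivalently noting $\mathrm{Tr}$ is conjugation-invariant and $\mathrm{Tr}[A_m]_n = \mathrm{Tr}(A_m)$) gives ${\rm Tr}[{\mathfrak I}_{mn}(A_m)] = {\rm Tr}(A_m)$.

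I do not expect a genuine obstacle here; the only point requiring a little care is making precise the claim that conjugation by the permutation matrix $P$ implements exactly the index relabelling $[\tau_{n,m}(r),\tau_{n,m}(s)]\leftrightarrow(r,s)$ appearing in the definition of ${\mathfrak I}_{mn}$ — one should fix the convention $P e_r = e_{\tau_{n,m}(r)}$ and verify the matrix-entry formula once. After that, everything reduces to the two elementary facts that similar matrices share eigenvalues and that block-diagonal matrices multiply blockwise, so the proof is short.
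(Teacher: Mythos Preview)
Your proof is correct and follows essentially the same idea as the paper's: both exploit that ${\mathfrak I}_{mn}(A_m)$ is obtained from $[A_m]_n$ by the \emph{same} permutation of rows and columns, i.e.\ by conjugation with a permutation matrix. Your formulation via $P[A_m]_nP^{-1}$ is a bit cleaner---in particular for part~(2), where similarity immediately gives equality of characteristic polynomials, whereas the paper argues this more informally from the preservation of diagonal entries.
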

\begin{proof}
\mbox{}
\begin{itemize}
\item[(1)]
We have
\begin{eqnarray}
&&{\mathfrak I}_{mn}(A_mB_m)[\tau _{n,m}(r),\tau_{n,m}(s)]=[A_mB_m]_n(r,s)=\sum _t[A_m]_n(r,t)[B_m]_n(t,s)\nonumber\\&&=
\sum _t {\mathfrak I}_{mn}(A_m)[\tau _{n,m}(r),\tau_{n,m}(t)]{\mathfrak I}_{mn}(B_m)[\tau _{n,m}(t),\tau_{n,m}(s)]
\end{eqnarray}
As $t$ takes all values in ${\mathbb Z}(n)$, the $\tau_{n,m}(t)$ also takes all values in ${\mathbb Z}(n)$, and from this follows Eq.(\ref{67}).
\item[(2)]
In general the eigenvalues of a matrix change after a permutation of columns or rows, 
but here we perform the same permutations on both columns and rows. 
The diagonal elements of $[A_m]_n$ remain as diagonal elements in
${\mathfrak I}_{mn}(A_m)$ and consequently the characteristic equation ${\rm det}([A_m]_n-\lambda{\bf 1}_n)$
is the same with the characteristic equation ${\rm det}({\mathfrak I}_{mn}(A_m)-\lambda{\bf 1}_n)$.
\end{itemize}
\end{proof}

\subsection{Bipartite tensors}

Let $(m_1,m_2)|(n_1,n_2)$ and 
\begin{eqnarray}
&&J_1=\{0,...,m_1-1\};\;\;\;\;\;I_1=\{1,...,n_1-1\}\nonumber\\
&&J_2=\{0,...,m_2-1\};\;\;\;\;\;I_2=\{1,...,n_2-1\}
\end{eqnarray}
Also let $A_{m_1,m_2}(r_1,r_2|s_1,s_2)$ be a tensor
with indices $r_1,s_1\in J_1$ and $r_2,s_2\in J_2$. 
We define the tensor
$[A_{m_1,m_2}]_{n_1,n_2}(r_1,r_2|s_1,s_2)$ where $r_1,s_1\in I_1$ and $r_2,s_2\in I_2$, where
\begin{eqnarray}\label{v2}
r_1,s_1 \in J_1\;\;{\rm and}\;\;r_2,s_2 \in J_2 \;\;&\rightarrow &\;\;
[A_{m_1,m_2}]_{n_1,n_2}(r_1,r_2|s_1,s_2)=A_{m_1,m_2}(r_1,r_2|s_1,s_2)\nonumber\\
{\rm otherwise}\;\;&\rightarrow &\;\;[A_{m_1,m_2}]_{n_1,n_2}(r_1,r_2|s_1,s_2)=0
\end{eqnarray}
We use the notation ${\mathfrak L}_{m_1,n_1;m_2,n_2}(A_{m_1,m_2})$ for the tensor:
\begin{eqnarray}\label{v1}
{\mathfrak L}_{m_1,n_1;m_2,n_2}(A_{m_1,m_2})[\tau _{n_1,m_1}(r_1),\tau_{n_2,m_2}(r_2)|\tau _{n_1,m_1}(s_1),\tau_{n_2,m_2}(s_2)]=[A_{m_1,m_2}]_{n_1,n_2}(r_1,r_2|s_1,s_2)
\end{eqnarray}
The tensors ${\mathfrak L}_{m_1,n_1;m_2,n_2}(A_{m_1,m_2})$ and $A_{m_1,m_2}$ contain the same information.
Since we will use these tensors in a quantum mechanical context, we will say that 
the indices $r_1,s_1$ describe the first component and the indices $r_2,s_2$ the second component of this
`bipartite tensor'.
The partial trace of such a tensor with respect to the second component is the following $n_1\times n_1$ matrix:
\begin{eqnarray}
{\rm Tr}_2 [{\mathfrak L}_{m_1,n_1;m_2,n_2}(A_{m_1,m_2})]=\sum _{u=0}^{n_2-1}{\mathfrak L}_{m_1,n_1;m_2,n_2}
(A_{m_1,m_2})[r_1,u|s_1,u]
\end{eqnarray}
The partial transpose  of $A_{m_1,m_2}$ with respect to the second component is
\begin{eqnarray}
[A_{m_1,m_2}(r_1,r_2|s_1,s_2)]^{T_2}=A_{m_1,m_2}(r_1,s_2|s_1,r_2)
\end{eqnarray}
\begin{lemma}\label{500}
For $(m_1,m_2)|(n_1,n_2)$
\begin{itemize}
\item[(1)]
The $n_1n_2$ eigenvalues of ${\mathfrak L}_{m_1,n_1;m_2,n_2}(A_{m_1,m_2})$
are the $m_1m_2$ eigenvalues of 
$A_{m_1,m_2}$ plus $n_1n_2-m_1m_2$ zeros. 
\item[(2)]
\begin{eqnarray}\label{19}
{\rm Tr}_2 [{\mathfrak L}_{m_1,n_1;m_2,n_2}(A_{m_1,m_2})]={\mathfrak I}_{m_1,n_1}[{\rm Tr}_2 A_{m_1,m_2}]
\end{eqnarray}
Similar result is true, for the partial trace of the tensor with respect to the first component.
\item[(3)]
If $A_{m_1,m_2}=B_{m_1}\otimes C_{m_2}$ then
\begin{eqnarray}\label{191}
{\mathfrak L}_{m_1,n_1;m_2,n_2}(A_{m_1,m_2})={\mathfrak I}_{m_1,n_1}(B_{m_1})\otimes 
{\mathfrak I}_{m_2,n_2}(C_{m_2})
\end{eqnarray}

\end{itemize}
\end{lemma}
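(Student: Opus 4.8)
The plan is to prove the three parts of Lemma~\ref{500} by reducing everything to the single-component case already treated in Lemma~\ref{AA}, exploiting the fact that the tensor $A_{m_1,m_2}(r_1,r_2|s_1,s_2)$ can be viewed as an ordinary $(m_1m_2)\times(m_1m_2)$ matrix once we linearize the pair of indices $(r_1,r_2)$, and similarly $[A_{m_1,m_2}]_{n_1,n_2}$ as an $(n_1n_2)\times(n_1n_2)$ matrix. The central observation I would establish first is that the permutation appearing in~(\ref{v1}), namely $(r_1,r_2)\mapsto(\tau_{n_1,m_1}(r_1),\tau_{n_2,m_2}(r_2))$, is exactly the permutation of the composite index set $\{0,\dots,n_1-1\}\times\{0,\dots,n_2-1\}$ induced coordinatewise by $\tau_{n_1,m_1}$ and $\tau_{n_2,m_2}$. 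Hence ${\mathfrak L}_{m_1,n_1;m_2,n_2}(A_{m_1,m_2})$ is obtained from the block matrix $[A_{m_1,m_2}]_{n_1,n_2}$ (which is $A_{m_1,m_2}$ padded with zeros) by applying the \emph{same} permutation to rows and to columns — precisely the structure of ${\mathfrak I}$ in Section~4, now with the composite permutation $\tau_{n_1,m_1}\times\tau_{n_2,m_2}$ in place of $\tau_{n,m}$.

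For part~(1), once this identification is made, the argument of Lemma~\ref{AA}(2) applies verbatim: conjugating $[A_{m_1,m_2}]_{n_1,n_2}$ by a permutation matrix leaves the characteristic polynomial unchanged, and the characteristic polynomial of the padded matrix $[A_{m_1,m_2}]_{n_1,n_2}$ is $(-\lambda)^{n_1n_2-m_1m_2}$ times that of $A_{m_1,m_2}$, since the padded block is lower-triangular-in-blocks with a zero block. So the eigenvalues are those of $A_{m_1,m_2}$ together with $n_1n_2-m_1m_2$ zeros. For part~(2), I would compute the partial trace directly from the defining relation~(\ref{v1}): fixing $r_1,s_1$ and summing over $u=0,\dots,n_2-1$, I split the sum into $u\in J_2$ and $u\notin J_2$. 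The padding rule~(\ref{v2}) kills all terms with $u\notin J_2$ (and also all terms where $r_1$ or $s_1$ lies outside $J_1$, which is what forces the result to be supported on the image of $\tau_{n_1,m_1}$), so only $u\in\{0,\dots,m_2-1\}$ survives, and after undoing the $\tau_{n_1,m_1}$ relabeling one reads off $({\rm Tr}_2 A_{m_1,m_2})(r_1|s_1)$ sitting in the $(\tau_{n_1,m_1}(r_1),\tau_{n_1,m_1}(s_1))$ slot and zero elsewhere — which is exactly the definition of ${\mathfrak I}_{m_1,n_1}[{\rm Tr}_2 A_{m_1,m_2}]$. One must be slightly careful that $\tau_{n_2,m_2}$ restricted to $J_2$ is the map $u\mapsto du$, so that summing over $u\in J_2$ on the left matches summing over the first $m_2$ indices on the right; this is immediate from the definition of $\tau_{n,m}$. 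The statement for the first component follows by symmetry of the construction.

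For part~(3), when $A_{m_1,m_2}=B_{m_1}\otimes C_{m_2}$, I would verify the identity entrywise. The left side evaluated at $(\tau_{n_1,m_1}(r_1),\tau_{n_2,m_2}(r_2)|\tau_{n_1,m_1}(s_1),\tau_{n_2,m_2}(s_2))$ equals $[B_{m_1}\otimes C_{m_2}]_{n_1,n_2}(r_1,r_2|s_1,s_2)$, which by the padding rule is $B_{m_1}(r_1|s_1)C_{m_2}(r_2|s_2)$ when $(r_1,s_1)\in J_1\times J_1$ and $(r_2,s_2)\in J_2\times J_2$ and $0$ otherwise; this factorizes as $([B_{m_1}]_{n_1}(r_1|s_1))([C_{m_2}]_{n_2}(r_2|s_2))$ because each bracketed factor already carries its own vanishing outside $J_1$ resp.\ $J_2$. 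On the right side, $({\mathfrak I}_{m_1,n_1}(B_{m_1})\otimes{\mathfrak I}_{m_2,n_2}(C_{m_2}))(\tau_{n_1,m_1}(r_1),\tau_{n_2,m_2}(r_2)|\tau_{n_1,m_1}(s_1),\tau_{n_2,m_2}(s_2))$ equals ${\mathfrak I}_{m_1,n_1}(B_{m_1})(\tau_{n_1,m_1}(r_1)|\tau_{n_1,m_1}(s_1))\cdot{\mathfrak I}_{m_2,n_2}(C_{m_2})(\tau_{n_2,m_2}(r_2)|\tau_{n_2,m_2}(s_2))$, which by the definition of ${\mathfrak I}$ is $[B_{m_1}]_{n_1}(r_1|s_1)\cdot[C_{m_2}]_{n_2}(r_2|s_2)$. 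Since the two composite permutations exhaust all index quadruples as the $r$'s and $s$'s range over $I_1,I_2$, the two tensors agree everywhere.

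The only genuine obstacle is bookkeeping: one must keep straight that $\tau_{n,m}$ is \emph{not} the identity on $\{m,\dots,n-1\}$, so that "the image of $\tau_{n_1,m_1}$ restricted to $J_1$" is the arithmetic progression $\{0,d_1,2d_1,\dots,(m_1-1)d_1\}$ rather than $\{0,\dots,m_1-1\}$, and to check that the two coordinatewise permutations really do induce a single permutation of the product index set — which is clear since a product of bijections is a bijection. Everything else is routine once the reduction "${\mathfrak L}$ is an ${\mathfrak I}$ for the product permutation" is in place; in particular part~(3) could alternatively be deduced from the fact that, under the linearization of indices, $[B_{m_1}\otimes C_{m_2}]_{n_1,n_2}=[B_{m_1}]_{n_1}\otimes[C_{m_2}]_{n_2}$ and the product permutation matrix factors as a tensor product of the two individual permutation matrices, so conjugation distributes over $\otimes$.
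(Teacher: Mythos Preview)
Your proposal is correct and follows essentially the same approach as the paper: for part~(1) both you and the paper linearize the double index and invoke Lemma~\ref{AA}; for part~(2) both compute the partial trace directly from the defining relations (\ref{v1}) and (\ref{v2}), with the paper organizing it as two separate steps ${\rm Tr}_2\mathfrak{L}(A)={\rm Tr}_2[A]_{n_1,n_2}$ and ${\rm Tr}_2[A]_{n_1,n_2}=\mathfrak{I}_{m_1,n_1}[{\rm Tr}_2 A]$ while you do it in one pass; and for part~(3) both verify the identity entrywise via the factorization $[B\otimes C]_{n_1,n_2}=[B]_{n_1}\otimes[C]_{n_2}$. Your additional remarks on bookkeeping and the alternative conjugation-by-tensor-product-of-permutations argument for part~(3) are sound elaborations that the paper omits.
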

\begin{proof}
\mbox{}
\begin{itemize}
\item[(1)]
Using a bijective map between
${\mathbb Z}(m_1)\times {\mathbb Z}(m_2)$ and ${\mathbb Z}(m)$ where $m=m_1m_2$
we relabel the elements of $A_{m_1,m_2}$ and regard them as elements of the $m\times m$ matrix
$A'_{m}$. We do a similar relabeling for ${\mathfrak L}_{m_1,n_1;m_2,n_2}(A_{m_1,m_2})$ 
and then use lemma \ref{AA}.
\item[(2)]
From Eq.(\ref{v1}) it follows that
\begin{eqnarray}
{\rm Tr}_2 [{\mathfrak L}_{m_1,n_1;m_2,n_2}(A_{m_1,m_2})]={\rm Tr}_2[A_{m_1,m_2}]_{n_1,n_2}
\end{eqnarray}
From Eq.(\ref{v2}) it follows that
\begin{eqnarray}
{\rm Tr}_2[A_{m_1,m_2}]_{n_1,n_2}={\mathfrak I}_{m_1,n_1}[{\rm Tr}_2 A_{m_1,m_2}]
\end{eqnarray}
Combining these two equations, we prove Eq.(\ref{19}).

\item[(3)]
If $A_{m_1,m_2}=B_{m_1}\otimes C_{m_2}$ then 
\begin{eqnarray}
[A_{m_1,m_2}]_{n_1,n_2}(r_1,r_2|s_1,s_2)=[B_{m_1}]_{n_1}(r_1,s_1)[C_{m_2}]_{n_2}(r_2,s_2)
\end{eqnarray}
But
\begin{eqnarray}
&&[A_{m_1,m_2}]_{n_1,n_2}(r_1,r_2|s_1,s_2)={\mathfrak L}_{m_1,n_1;m_2,n_2}(A_{m_1,m_2})[\tau _{n_1,m_1}(r_1),\tau_{n_2,m_2}(r_2)|\tau _{n_1,m_1}(s_1),\tau_{n_2,m_2}(s_2)]\nonumber\\
&&[B_m]_n(r_1,s_1)={\mathfrak I}_{mn}(B_m)[\tau _{n,m}(r_1),\tau_{n,m}(s_1)]\nonumber\\
&&[C_m]_n(r_2,s_2)={\mathfrak I}_{mn}(C_m)[\tau _{n,m}(r_2),\tau_{n,m}(s_2)].
\end{eqnarray}
Therefore
\begin{eqnarray}
&&{\mathfrak L}_{m_1,n_1;m_2,n_2}(A_{m_1,m_2})[\tau _{n_1,m_1}(r_1),\tau_{n_2,m_2}(r_2)|\tau _{n_1,m_1}(s_1),\tau_{n_2,m_2}(s_2)]\nonumber\\&&=
{\mathfrak I}_{mn}(B_m)[\tau _{n,m}(r_1),\tau_{n,m}(s_1)]{\mathfrak I}_{mn}(C_m)[\tau _{n,m}(r_2),\tau_{n,m}(s_2)].
\end{eqnarray}
and this proves Eq.(\ref{191}).
\end{itemize}
\end{proof}

\section{The quantum system $\Sigma (n)$ with variables in ${\mathbb Z}(n)$}\label{NN}

We consider a quantum system $\Sigma (n)$ with positions and momenta in ${\mathbb Z}(n)$.
The Hilbert space $H(n)$ for this system is $n$-dimensional, and let
$|X_n;r\rangle$ where $r\in {\mathbb Z}(n)$, be an orthonormal basis that we call `basis of position states'.
The $X_n$ in the notation is not a variable, but it simply indicates that they are position states 
in the Hilbert space $H(n)$.
Through a Fourier transform we get another orthonormal basis that we call `basis of momentum states':
\begin{equation}
|{P_n};r\rangle=F_n|{X_n};r\rangle;\;\;\;\;
F_n=n^{-1/2}\sum _{r,s}\omega _n(rs)\ket{X_n;r}\bra{X_n;s}.
\end{equation}

The position-momentum phase space is the toroidal 
lattice ${\mathbb Z}(n)\times {\mathbb Z}(n)$.
Displacements in this phase space are performed with the operators
\begin{eqnarray}
&&{D}_n(\alpha, \beta, \gamma)=Z_n(\alpha) \;X_n(\beta) \;\omega _n(\gamma);\;\;\;\;
\alpha, \beta, \gamma\in {\mathbb Z}(n)
\nonumber\\
&&Z_n(\alpha) =\sum _{r}\omega _n(r \alpha)|{X_n};r\rangle \langle {X_n};r|=\sum _{r}|{P_n};r+\alpha\rangle \langle {P_n};r|
\nonumber\\
&&X_n(\beta)=
\sum _{r}\omega _n(-r \beta)|{P_n};r\rangle \langle {P_n};r|=\sum_r|{X_n};r+\beta\rangle \langle {X_n};r|
\end{eqnarray}
where
\begin{eqnarray}\label{546}
&&{D}_n(\alpha _1, \beta _1, \gamma _1){D}_n(\alpha _2, \beta _2, \gamma _2)=
{D}_n(\alpha _1+\alpha _2, \beta _1+\beta_2, \gamma _1+\gamma _2-\alpha _2\beta _1)\nonumber\\
&&[{D}_n(\alpha , \beta , \gamma )]^\dagger ={D}_n(-\alpha , -\beta , -\gamma-\alpha \beta )
\end{eqnarray}
Therefore the  operators ${D}_n(\alpha, \beta, \gamma)$ form a representation of the Heisenberg-Weyl group 
$HW[{\mathbb Z}(n)]$.

The symplectic operators $S_n(\kappa ,\lambda|\mu, \nu)$ in the present context have been discussed in 
detail in ref\cite{FI1}.
Here we only mention that 
\begin{eqnarray}
S_n(\kappa ,\lambda|\mu, \nu)X_n(1)[S_n(\kappa ,\lambda|\mu, \nu)]^\dagger &=& D_n(\lambda, \kappa,0)\nonumber\\
S_n(\kappa ,\lambda|\mu, \nu)Z_n(1)[S_n(\kappa ,\lambda|\mu, \nu)]^\dagger &=& D_n(\nu,\mu,0).
\end{eqnarray}
From this follows that $\kappa, \mu$ are associated with displacements in positions and 
$\lambda, \nu$ are associated with displacements in momenta. 
Therefore $\kappa, \mu$ take values in ${\mathbb Z}(n)$ and $\lambda, \nu$
take values in its Pontryagin dual group which is isomorphic to ${\mathbb Z}(n)$.
This has been used earlier, in section \ref {HW}. 

An arbitrary state $\ket{f_n}$ in $H(n)$ can be written as
\begin{eqnarray}
\ket{f_n}=\sum _{r=0}^{n-1} f_n(r)\ket{X_n;r}.
\end{eqnarray}
We call $R(n)$ the set of the density matrices $\rho _n$ of this system.

\section{Subsystems and supersystems: Embeddings and their compatibility}

We consider the systems $\Sigma(n)$ with $n\in {\mathbb X}$, and in each of them we 
consider an orthonormal basis $\ket{X_n;r}$ (where $r\in {\mathbb Z}(n)$).

\begin{definition}\label{def}
For $m|n$, the $\Sigma (m)$ is a subsystem of $\Sigma (n)$ (which we denote as $\Sigma (m)\prec \Sigma (n)$), or 
equivalently the $\Sigma (n)$ is a supersystem of $\Sigma (m)$ ($\Sigma (n)\succ \Sigma (m)$),
in the following sense:
\begin{itemize}
\item[(1)]
Position and momentum in the system $\Sigma (m)$ take values in ${\mathbb Z}(m)$, 
which is an additive subgroup of 
${\mathbb Z}(n)$, where the position and momentum of $\Sigma (n)$ belong.
\item[(2)]
There are several embeddings between these systems 
(which involve quantum states, density matrices, operators, etc) which preserve the structure and 
which are compatible to each other, in the sense 
discussed in the subsections below.
\end{itemize}
\end{definition}
The system $\Sigma (n)$ has $\sigma _0(n)-2$ `proper' subsystems. We exclude here itself as a subsystem, 
and a $1$-dimensional subsystem (as discussed in remark \ref{567}).
The set $\Sigma=\{\Sigma (2), \Sigma (3), ...\}$ with the partial order $\prec$, is a directed poset.

\subsection{Embeddings of quantum states}
For $m|n$, the Hilbert space $H(m)$ is embedded into $H(n)$ with the following linear map which is an injection:
\begin{eqnarray}\label{1}
&&{\cal A}_{mn}:\;\;
\sum _{r=0}^{m-1}f_m(r)\ket{X_m;r}\;\rightarrow\;
\sum _{s=0}^{n-1}f_n(s)\ket{X_n;s}\nonumber\\
&&s=dr\;\rightarrow\;f_n(s)=f_m\left (\frac{s}{d}\right );\;\;\;\;\;d=\frac{n}{m}\nonumber\\
&&{\rm otherwise}\;\rightarrow\;f_n(s)=0
\end{eqnarray}
The same map can also be written in terms of momentum states as
\begin{eqnarray}\label{1V}
&&{\cal A}_{mn}:\;\;
\sum _{r=0}^{m-1}g_m(r)\ket{P_m;r}\;\rightarrow\;
\sum _{s=0}^{n-1}g_n(s)\ket{P_n;s}\nonumber\\
&&s=r\;({\rm mod}\;m)\;\;\rightarrow\;g_n(s)=d^{-1/2}g_m(r)
\end{eqnarray}
The equivalence of Eqs(\ref{1}),(\ref{1V}) is proved with a Fourier transform.
According to Eq.(\ref{1V}), in the momentum basis, a vector $v$ in an $m$-dimensional space, is mapped to the vector
$d^{-1/2}(v,v,...,v)^T$ (where $v$ is repeated $d$ times) in the $n$-dimensional space.

\begin{proposition}
The map ${\cal A}_{mn}$ preserves the scalar product.
\end{proposition}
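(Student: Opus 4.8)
The plan is to verify directly from the explicit formula that $\mathcal{A}_{mn}$ is an isometry, and the cleanest route is to work in the position basis using Eq.~(\ref{1}). Given two vectors $\ket{f_m}=\sum_{r=0}^{m-1}f_m(r)\ket{X_m;r}$ and $\ket{h_m}=\sum_{r=0}^{m-1}h_m(r)\ket{X_m;r}$ in $H(m)$, I would compute the scalar product of their images. Since $\{\ket{X_n;s}\}$ is orthonormal, $\langle \mathcal{A}_{mn}f_m|\mathcal{A}_{mn}h_m\rangle=\sum_{s=0}^{n-1}\overline{f_n(s)}\,h_n(s)$, where $f_n,h_n$ are the transformed amplitudes. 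By Eq.~(\ref{1}), $f_n(s)$ and $h_n(s)$ vanish unless $s=dr$ for some $r\in\{0,\dots,m-1\}$, and in that case $f_n(dr)=f_m(r)$, $h_n(dr)=h_m(r)$. Hence the sum over $s$ collapses to $\sum_{r=0}^{m-1}\overline{f_m(r)}\,h_m(r)=\langle f_m|h_m\rangle$, which is exactly the claim.

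Alternatively one can run the same computation in the momentum basis via Eq.~(\ref{1V}): there $g_n(s)=d^{-1/2}g_m(s\bmod m)$ for all $s\in\{0,\dots,n-1\}$, so $\sum_{s=0}^{n-1}\overline{g_n(s)}\,k_n(s)=d^{-1}\sum_{s=0}^{n-1}\overline{g_m(s\bmod m)}\,k_m(s\bmod m)$. As $s$ ranges over $\{0,\dots,n-1\}$ each residue class mod $m$ is hit exactly $d=n/m$ times, so this equals $d^{-1}\cdot d\cdot\sum_{r=0}^{m-1}\overline{g_m(r)}\,k_m(r)=\langle g_m|k_m\rangle$; this is why the normalization factor $d^{-1/2}$ appears in Eq.~(\ref{1V}). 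Either version suffices, and I would present the position-basis one as the primary argument since it is a one-line reindexing, mentioning the momentum-basis check as a consistency remark (the equivalence of the two descriptions having already been established by the Fourier transform noted after Eq.~(\ref{1V})).

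There is essentially no obstacle here: the only thing to be careful about is bookkeeping with the two different index ranges (the sparse image in the position basis versus the $d$-fold periodic image in the momentum basis) and making sure the normalization constant is tracked correctly in the momentum-basis version. By polarization, preservation of the scalar product is equivalent to preservation of the norm, so one could even restrict to $\ket{f_m}=\ket{h_m}$, but since the computation is no harder for two distinct vectors I would just do it in full generality.
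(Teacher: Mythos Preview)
Your proof is correct and takes essentially the same approach as the paper, which simply says the result follows immediately from Eq.~(\ref{1}); you have merely spelled out the one-line reindexing explicitly and added the momentum-basis consistency check.
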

\begin{proof}
This follows immediately from Eq.(\ref{1}). 
\end{proof}
The various ${\cal A}_{mn}$ maps are compatible to each other, in the sense that
for all $m,n,\ell$ such that $m|n|\ell$, and for all vectors  
$\ket{f_m}$ in $H(m)$, we get ${\cal A}_{n\ell}[{\cal A}_{mn}(\ket{f_m})]={\cal A}_{m\ell}(\ket{f_m})$.
We use the following simplified notation to express this:
\begin{eqnarray}
m|n|\ell\;\rightarrow\;{\cal A}_{n\ell}\circ {\cal A}_{mn}={\cal A}_{m\ell}.
\end{eqnarray}
Below we use analogous simplified notation.

The adjoint map ${\cal A}_{mn}^\dagger$ to ${\cal A}_{mn}$ is a map from 
the dual Hilbert space $H^*(n)$ into the  dual Hilbert space $H^*(m)$ (where $m|n$) as follows:
\begin{eqnarray}\label{11}
&&{\cal A}_{mn}^\dagger:\;\;
\sum _{s=0}^{n-1}f_n(s)\bra{X_n;s}\;\rightarrow\;
\sum _{r=0}^{m-1}f_m(r)\bra{X_m;r}\nonumber\\
&&f_m(r)=f_n(dr)
\end{eqnarray}
The $\bra{X_n;s}$ with $s\ne dr$ in $H^*(n)$ are mapped into the zero vector in $H^*(m)$.

\begin{remark}\label{xc}
For every state $\ket{X_n;s}$ in $H(n)$ with $s\in {\mathbb Z}(n)-{\mathbb Z}^*(n)$ (where $n\notin \Pi$),
there exists another state $\ket{X_m;\frac{s}{d}}$ in $H(m)$, such that
\begin{eqnarray}
{\cal A}_{mn}\left (\ket{X_m;\frac{s}{d}}\right )=\ket{X_n;s};\;\;\;\;m=\frac{n}{d};\;\;\;\;
d={\rm GCD}(s,n)>1
\end{eqnarray}
Therefore the Hilbert space $H(n)$ can be partitioned into two parts
\begin{eqnarray}\label{2}
H(n)&=&H_A(n)\oplus H_B(n)\nonumber\\
H_A(n)&=&{\rm span}\{\ket{X_n;s}\;|\;s\in {\mathbb Z}(n)-{\mathbb Z}^*(n)\}\nonumber\\
H_B(n)&=&{\rm span}\{\ket{X_n;s}\;|\;s\in {\mathbb Z}^*(n)\}
\end{eqnarray}
The dimensions of $H_A(n), H_B(n)$ are $n-\varphi(n)$ and $\varphi(n)$, correspondingly.
$\Sigma (n)$ `shares' all states in $H_A(n)$, with some of its $\sigma_0(n)-2$ subsystems.
Physically, the information contained in the quantum states in the subsystems is also contained in the 
quantum states in $H_A(n)$.
Quantum states in $H_B(n)$, contain information which cannot be found in the subsystems. 

\end{remark}

\subsection{Embeddings of density matrices}

\begin{notation}\label{25}
If $\Theta _m$ is an operator (or a density matrix) we use the `tilde notation' for the $m\times m$ matrix
\begin{eqnarray}
\widetilde \Theta _m(r_1,r_2)=\bra{X_m;r_1}\Theta _m\ket{X_m;r_2},
\end{eqnarray} 
that consists of its matrix elements in the position basis $\ket{X_m;r}$.
We also use the notation
\begin{eqnarray}
{\mathfrak I}_{mn}(\Theta _m)=\sum _{r_1,r_2\in {\mathbb Z}(n)}[{\mathfrak I}_{mn}(\widetilde\Theta _m)](r_1,r_2)
\ket{X_n;r_1}\bra{X_n;r_2};\;\;\;\;m|n.
\end{eqnarray}
Analogous notation is also used for bipartite systems, below.
\end{notation}
It is easily seen that in the momentum basis
\begin{eqnarray}
s_1=r_1\;({\rm mod}\;m)\;{\rm and}\;s_2=r_2\;({\rm mod}\;m)\;\rightarrow\;
\bra{P_n;s_1}{\mathfrak I}_{mn}(\Theta _m)\ket{P_n;s_2}=\frac{1}{d}\bra{P_m;r_1}\Theta _m\ket{P_m;r_2}
\end{eqnarray}
where $d=n/m$. 
This is the analogue of Eq.(\ref{1V}) for density matrices. Here an $m\times m$ matrix $A$ is mapped
into the $n\times n$ matrix
\begin{eqnarray}
\frac{1}{d}
\begin{pmatrix}
A&A&\cdots&A\\
A&A&\cdots&A\\
\vdots &\vdots&\ddots&\vdots\\
A&A&\cdots&A
\end{pmatrix}
\end{eqnarray}
which contains $d^2$ times the matrix $A$.
Below we mainly work in the position basis, but everything can also be expressed in the momentum basis.

For $m|n$, the set of density matrices $R(m)$ is embedded into $R(n)$, with the injection
\begin{eqnarray}\label{2}
&&{\mathfrak I}_{mn}:\;\;\rho _m\;\rightarrow\;{\mathfrak I}_{mn}(\rho _m).
\end{eqnarray}
\begin{proposition}
The map ${\mathfrak I}_{mn}$ preserves the trace of the product of two matrices.
\end{proposition}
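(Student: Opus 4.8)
The statement to prove is that the embedding $\mathfrak{I}_{mn}: R(m) \to R(n)$ preserves the trace of the product of two matrices, i.e.\ that for all $\rho_m, \sigma_m$ (or more generally for arbitrary operators on $H(m)$) we have $\mathrm{Tr}[\mathfrak{I}_{mn}(\rho_m)\,\mathfrak{I}_{mn}(\sigma_m)] = \mathrm{Tr}[\rho_m \sigma_m]$. The plan is to reduce everything to Lemma~\ref{AA}, which already does the heavy lifting. First I would pass to the position basis and the tilde notation of Notation~\ref{25}: by definition $\mathfrak{I}_{mn}(\Theta_m)$ is the operator on $H(n)$ whose position-basis matrix is $\mathfrak{I}_{mn}(\widetilde{\Theta}_m)$, the $n\times n$ matrix built from $\widetilde{\Theta}_m$ by the permutation-relabelling construction of Section~4. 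So the claim is equivalent to the purely matrix-theoretic identity
\begin{eqnarray}
\mathrm{Tr}\!\left[\mathfrak{I}_{mn}(\widetilde{\rho}_m)\,\mathfrak{I}_{mn}(\widetilde{\sigma}_m)\right] = \mathrm{Tr}\!\left[\widetilde{\rho}_m\,\widetilde{\sigma}_m\right].
\end{eqnarray}

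The key step is then immediate from Lemma~\ref{AA}: part~(1) of that lemma gives $\mathfrak{I}_{mn}(A_m)\mathfrak{I}_{mn}(B_m) = \mathfrak{I}_{mn}(A_m B_m)$, so applying it with $A_m = \widetilde{\rho}_m$ and $B_m = \widetilde{\sigma}_m$ yields $\mathfrak{I}_{mn}(\widetilde{\rho}_m)\,\mathfrak{I}_{mn}(\widetilde{\sigma}_m) = \mathfrak{I}_{mn}(\widetilde{\rho}_m \widetilde{\sigma}_m)$; and part~(2) of Lemma~\ref{AA} says $\mathrm{Tr}[\mathfrak{I}_{mn}(C_m)] = \mathrm{Tr}(C_m)$ for any $m\times m$ matrix $C_m$, so taking $C_m = \widetilde{\rho}_m \widetilde{\sigma}_m$ gives exactly the displayed identity. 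One should note in passing that $\widetilde{\rho_m \sigma_m} = \widetilde{\rho}_m \widetilde{\sigma}_m$ because the position states $\ket{X_m;r}$ form an orthonormal basis, so matrix multiplication in that basis represents operator composition; hence $\mathrm{Tr}[\widetilde{\rho}_m \widetilde{\sigma}_m] = \mathrm{Tr}(\rho_m\sigma_m)$ and the chain of equalities closes.

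There is essentially no obstacle here: the only thing to be careful about is not to conflate the two meanings of the symbol $\mathfrak{I}_{mn}$ (the map on operators from Notation~\ref{25} versus the map on plain $n\times n$ matrices from Section~4), and to make sure Lemma~\ref{AA} is invoked at the level of the tilde matrices, not the abstract operators. If one wants the statement specifically for density matrices rather than general operators, nothing changes—$\widetilde{\rho}_m$ is just a particular $m\times m$ matrix and Lemma~\ref{AA} has no positivity or normalization hypotheses. It may also be worth remarking that the same argument shows $\mathrm{Tr}[\mathfrak{I}_{mn}(\Theta_m)] = \mathrm{Tr}(\Theta_m)$ and, combined with part~(1), that $\mathfrak{I}_{mn}$ is multiplicative on products of any finite number of operators, so the Hilbert--Schmidt inner product $\langle \rho_m, \sigma_m\rangle = \mathrm{Tr}(\rho_m^\dagger \sigma_m)$ is preserved (using that $\mathfrak{I}_{mn}$ also commutes with the adjoint, which follows from the corresponding property of the matrix construction). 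This is the natural analogue, for density matrices, of the earlier proposition that ${\cal A}_{mn}$ preserves the scalar product of states.
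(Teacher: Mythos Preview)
Your proof is correct and follows exactly the same approach as the paper: the paper's own argument is the single sentence ``using lemma \ref{AA}, we prove that ${\rm Tr}(\rho _m \rho' _m)={\rm Tr}[{\mathfrak I}_{mn}( \rho _m){\mathfrak I}_{mn}( \rho' _m)]$,'' and you have simply spelled out the two ingredients (multiplicativity from part~(1), trace preservation from part~(2)) and handled the tilde bookkeeping carefully.
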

\begin{proof} 
Indeed, using lemma \ref{AA}, we prove that
\begin{eqnarray}\label{20}
{\rm Tr}(\rho _m \rho' _m)={\rm Tr}[{\mathfrak I}_{mn}( \rho _m){\mathfrak I}_{mn}( \rho' _m)].
\end{eqnarray}
\end{proof}
A consequence of this is the following:
\begin{cor}
The map ${\mathfrak I}_{mn}$ preserves the ${\rm Tr}( \rho _m ^2)$, which can be regarded
as a measure of how mixed (or pure) the state is.
\end{cor}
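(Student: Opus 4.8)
The plan is to derive this corollary directly from the preceding proposition, which asserts that $\mathfrak{I}_{mn}$ preserves the trace of a product of two matrices: ${\rm Tr}(\rho_m\rho'_m)={\rm Tr}[\mathfrak{I}_{mn}(\rho_m)\mathfrak{I}_{mn}(\rho'_m)]$ as established in Eq.~(\ref{20}). The result for ${\rm Tr}(\rho_m^2)$ is then simply the special case $\rho'_m=\rho_m$, so the core of the argument is a one-line specialization.

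The key steps I would carry out are as follows. First, set $\rho'_m=\rho_m$ in Eq.~(\ref{20}) to obtain ${\rm Tr}(\rho_m^2)={\rm Tr}[\mathfrak{I}_{mn}(\rho_m)\mathfrak{I}_{mn}(\rho_m)]$. Second, invoke part~(1) of Lemma~\ref{AA} (equivalently, Notation~\ref{25} together with Lemma~\ref{AA}), which gives $\mathfrak{I}_{mn}(\rho_m)\mathfrak{I}_{mn}(\rho_m)=\mathfrak{I}_{mn}(\rho_m^2)$, so that ${\rm Tr}[\mathfrak{I}_{mn}(\rho_m)^2]={\rm Tr}[\mathfrak{I}_{mn}(\rho_m^2)]$; combined with the previous step this yields ${\rm Tr}(\rho_m^2)={\rm Tr}[\mathfrak{I}_{mn}(\rho_m)^2]$, which is exactly the claim. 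Third, append the physical remark already signalled in the statement: since $\mathfrak{I}_{mn}(\rho_m)\in R(n)$ is the density matrix representing the `same' state in the supersystem $\Sigma(n)$, and ${\rm Tr}(\rho^2)$ is the purity, the embedding leaves the purity (hence the degree of mixedness) unchanged — consistent with part~(2) of Lemma~\ref{AA}, which says $\mathfrak{I}_{mn}(\rho_m)$ has the same nonzero spectrum as $\rho_m$ padded with zeros, so $\sum_i \lambda_i^2$ is unaffected.

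There is essentially no obstacle here: the statement is a corollary in the strict sense, and every ingredient (the proposition on ${\rm Tr}(\rho_m\rho'_m)$, and Lemma~\ref{AA}) is already proved. The only point requiring a modicum of care is making explicit \emph{why} ${\rm Tr}(\rho_m^2)$ may be interpreted as a mixedness measure — but this is standard ($\rho_m$ is a density matrix, so $0<{\rm Tr}(\rho_m^2)\le 1$ with equality iff $\rho_m$ is pure), and in any case the corollary as stated only claims \emph{preservation} of the quantity, which is the trivial specialization above. I would therefore keep the proof to two or three lines, citing Eq.~(\ref{20}) and Lemma~\ref{AA}.

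\begin{proof}
Setting $\rho'_m=\rho_m$ in Eq.~(\ref{20}) gives ${\rm Tr}(\rho_m^2)={\rm Tr}[\mathfrak{I}_{mn}(\rho_m)\mathfrak{I}_{mn}(\rho_m)]$, and by Lemma~\ref{AA}(1) the right-hand side equals ${\rm Tr}[\mathfrak{I}_{mn}(\rho_m^2)]$, so ${\rm Tr}(\rho_m^2)={\rm Tr}\{[\mathfrak{I}_{mn}(\rho_m)]^2\}$. Equivalently, by Lemma~\ref{AA}(2) the matrix $\mathfrak{I}_{mn}(\rho_m)$ has the same nonzero eigenvalues as $\rho_m$ together with $n-m$ zeros, so $\sum_i\lambda_i^2$ is unchanged. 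Since ${\rm Tr}(\rho_m^2)$ satisfies $0<{\rm Tr}(\rho_m^2)\le 1$, with equality precisely for pure states, it is a measure of how mixed the state is, and this quantity is the same for $\rho_m$ in $\Sigma(m)$ and for its counterpart $\mathfrak{I}_{mn}(\rho_m)$ in $\Sigma(n)$.
\end{proof}
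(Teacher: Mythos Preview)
Your proposal is correct and matches the paper's approach: the paper presents the corollary simply as ``a consequence of this'' (i.e., of Eq.~(\ref{20})), without writing out a separate proof, and your argument is precisely the specialization $\rho'_m=\rho_m$ that the paper has in mind. The additional remarks you give (invoking Lemma~\ref{AA} and the purity interpretation) are correct elaborations but not strictly needed.
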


The eigenvalues of $\widetilde\rho_n$ (where the tilde notation is defined in \ref{25}), 
are the eigenvalues of $\widetilde\rho_m$, plus $n-m$ zeros.
Also, if $\ket{v}$ is an eigenvector of $\widetilde \rho_m$, then ${\cal A}_{mn}[\ket{v}]$
is an eigenvector of $\widetilde\rho _n$. A consequence of this is the following compatibility 
condition between the
${\mathfrak I}_{mn}$ and ${\cal A}_{mn}$:
\begin{eqnarray}
\rho _m=\sum _{r=0}^{m-1} p_r\ket{v_r}\bra{v_r}\;\rightarrow\;
{\mathfrak I}_{mn}[\rho _m]=\sum _{r=0}^{m-1} p_r\ket{u_r}\bra{u_r};\;\;\;\;\ket{u_r}={\cal A}_{mn}[\ket{v_r}]
\end{eqnarray}
In addition to that, the various ${\mathfrak I}_{mn}$ maps are compatible to each other:  
\begin{eqnarray}
m|n|\ell\;\rightarrow\;{\mathfrak I}_{n\ell}\circ {\mathfrak I}_{mn}={\mathfrak I}_{m\ell}
\end{eqnarray}

\subsection{Embeddings of orthogonal projectors}\label{proje}
A measurement on $\Sigma (n)$ is described by an $n$-tuple 
of orthogonal projectors acting on $H(n)$:
\begin{eqnarray}
\pi _s(n)\pi _q(n)=\pi _s(n)\delta(s,q);\;\;\;\;\;s=0,...,(n-1).
\end{eqnarray}
For $m|n$, the set of all $m$-tuples of projectors is embedded into the set 
of all $n$-tuples of projectors as follows.
\begin{eqnarray}
&&{\mathfrak P}_{mn}:\;\;\{\pi _r(m)\;|\;r=0,...,m-1\}\;\;\rightarrow\;\;\{\pi _{s}(n)\;|\;s=0,...,n-1\}\nonumber\\
&&\pi _{dr}(n)={\mathfrak I}_{mn}(\pi _{r}(m));\;\;\;\;\;
d=\frac{n}{m};\;\;\;\;\;r=0,...,m-1.
\end{eqnarray}
This defines $m$ of the projectors in the $n$-tuple (those with index $dr$), and the rest are chosen so that 
together with the $\pi _{dr}(n)$
they form an orthogonal set of $n$ projectors.
There are many ways of doing this, but the results below do not depend on the particular choice.

We note that the outcome $s$ associated with a measurement on $\Sigma (m)$ described by $\pi _s(m)$,
corresponds to the outcome $ds$ associated with a measurement on $\Sigma (n)$ described by $\pi _{ds}(n)$.
Also we can prove that for $m|n|\ell$ we get ${\mathfrak P}_{n\ell}\circ {\mathfrak P}_{mn}={\mathfrak P}_{m\ell}$.

These projectors are compatible with the embedding of Eq.(\ref{2}) in the sense that for $m|n$
\begin{eqnarray}
&&\pi _{dr}(n){\mathfrak I}_{mn}(\rho _m)={\mathfrak I}_{mn}[\pi _r(m)\rho _m]
;\;\;\;r=0,...,m-1\nonumber\\
&&\pi _{s}(n){\mathfrak I}_{mn}(\rho _m)=0;\;\;\;s\ne dr
\end{eqnarray}
A measurement with the projectors $\pi _r(m)$ on the density matrix $\rho _m$ in the 
system $\Sigma (m)$, will give the result $r$ with probability ${\rm Tr}[\pi _r(m)\rho _m]$.
A measurement with the projectors $\pi _{s}(n)$ on the density matrix ${\mathfrak I}_{mn}(\rho _m)$ in the 
system $\Sigma (n)$, will give the corresponding result $dr$ with the same probability 
${\rm Tr}[\pi _{dr}(n){\mathfrak I}_{mn}(\rho _m)]={\rm Tr}[\pi _r(m)\rho _m]$, and it will never give outcome
different than $dr$.

A density matrix $\rho _m$ after a non-selective measurement described by the projectors $\{\pi _s(m)\}$ becomes 
\begin{eqnarray}\label{bnz}
{\cal M}(\rho_m)=\sum _{r=0}^{m-1}\pi _r(m)\rho _m\pi _r(m).
\end{eqnarray}
This formalism is compatible with 
the embeddings ${\mathfrak I}_{mn}$, in the sense that
\begin{eqnarray}
m|n\;\;\rightarrow\;\;{\mathfrak I}_{mn}\left [\sum _{r=0}^{m-1}\pi _r(m)\rho _m\pi _r(m)\right ]
=\sum _{s=0}^{n-1}\pi _s(n)\;{\mathfrak I}_{mn}(\rho _m)\;\pi _s(n)
\end{eqnarray}
In the sum on the right hand side, all terms corresponding to $s\ne dr$ are equal to zero.
This is consistent with our comment earlier that although there are many ways to choose $\pi _{s}(n)$ 
with $s\ne dr$, 
the result does not depend on the particular choice.  

\subsection{Embeddings of displacement and symplectic operators}

For $m|n$, the embeddings ${\mathfrak X}_{mn}$ 
of the displacecement operators are similar to the one discussed in section \ref{HW}
(although here we have a different representation of the Heisenberg-Weyl group). 
We note here that these embeddings are compatible with the ${\cal A}_{mn}$, in the sense that 
\begin{eqnarray}
m|n\;\rightarrow\;{\cal A}_{mn}\circ D_m(\alpha,\beta, \gamma)={\mathfrak X}_{mn}[D_m(\alpha,\beta, \gamma)]\circ  {\cal A}_{mn};\;\;\;\;
d=\frac{n}{m}.
\end{eqnarray}
Similar statement can be made for the symplectic operators:
\begin{eqnarray}
m|n\;\rightarrow\;{\cal A}_{mn}\circ S_m(\kappa, \lambda |\mu,\nu)
={\mathfrak X}_{mn}[S_m(\kappa, \lambda |\mu,\nu)]\circ  {\cal A}_{mn};\;\;\;\;
d=\frac{n}{m}.
\end{eqnarray}

\section{Ubiquitous and nonubiquitous quantities}\label{ubi}

In each system $\Sigma (n)$ we can define various quantities. Below we give three different categories of such quantities,
which are formally defined as maps
\begin{eqnarray}\label{bb}
&&E_n:\;\;H(n)\;\rightarrow \;{\mathbb R};\;\;\;E_n(\ket{f_n})\in {\mathbb R}\nonumber\\
&&E_n:\;\;H(n)\;\rightarrow \;{\mathbb M}_n;\;\;\;E_n(\ket{f_n})\in {\mathbb M}(n)\nonumber\\
&&E_n:\;\;{\mathbb R}\;\rightarrow \;R(n);\;\;\;\;E_n(\lambda)\in R(n)
\end{eqnarray}
${\mathbb M}(n)$ is the set of $n\times n$ complex matrices.
Entropic quantities are examples in the first category.
Wigner and Weyl functions are examples in the second category.
Density matrices $E_n$ given by 
\begin{eqnarray}\label{bb1}
\lambda\;\rightarrow \;E_n(\lambda)=\sum _{r=0}^{n-1}p_n(r;\lambda)\ket{X_n;r}\bra{X_n;r};\;\;\;\;\sum _{r=0}^{n-1}p_n(r;\lambda)=1,
\end{eqnarray}
which are associated with a family of distributions $p_n(r;\lambda)$
are examples in the third category. 
An example is the case where $p_n(r;\lambda)=\lambda ^r(\lambda -1)/(\lambda^n-1)$.
Such matrices may be useful in a particular application.

The following important questions arise:
\begin{itemize}
\item
If we calculate the quantity $E_m (\ket {f_m})$ for a state $\ket {f_m}$ of the system $\Sigma (m)$, 
and then regard this state within the supersystem 
$\Sigma (n)$ (for $m|n$) and we calculate $E_n[{\cal A}_{mn}(\ket {f_m})]$, will we get the same answer?
In other words, we explore whether the relation
\begin{eqnarray}\label{rf}
E_n[{\cal A}_{mn}(\ket {f_m})]=E_m (\ket {f_m})
\end{eqnarray}
is true, for all states $\ket {f_m}$.

In the case that $E_m (\ket {f_m})$ is an $m\times m$ matrix the term `same answer' means 
$E_n[{\cal A}_{mn}(\ket {f_m})]={\mathfrak I}_{mn}[E_m (\ket {f_m})]$.
Also for the quantities in the third category in Eq.(\ref{bb}), the analogous question is whether ${\mathfrak I}_{mn}[E_m(\lambda)]=E_n(\lambda)$.
\item
If there is a property among $E_m (\ket {f_m})$, do the $E_n[{\cal A}_{mn}(\ket {f_m})]$ have
an analogous property?
\end{itemize}
We call ubiquitous quantities the ones for which the answer is positive, and nonubiquitous the ones for 
which the answer is negative. Nonubiquitous quantities are `multivalued' in the sense that
for a state $\ket {f_m}$ in $H(m)$, we can consider the quantities
\begin{eqnarray}\label{601}
{\mathfrak S}(\ket{f_m})=\{E_n[{\cal A}_{mn}(\ket {f_m})]\;|\;n=2m,3m,...\}
\end{eqnarray}
which are in general different from each other.
Therefore ubiquity is a concept which tells us which quantities fit our structure of regarding smaller systems as subsystems of larger ones. 
Below we make formal these ideas.

\begin{definition}
The $\mathbb E=\{E_2,E_3,...\}$ is an ubiquitous quantity in $\Sigma =\{\Sigma(2), \Sigma (3),...\}$
if for $m|n$:
\begin{itemize}
\item[(1)]
$E_n\circ {\cal A}_{mn}=E_m$, for quantities in the first category in Eq.(\ref{bb}).
\item[(2)]
$E_n[{\cal A}_{mn}(\ket {f_m})]={\mathfrak I}_{mn}[E_m(\ket {f_m})]$, for quantities in the second category in Eq.(\ref{bb}).
\item[(3)]
${\mathfrak I}_{mn}[E_m(\lambda)]=E_n(\lambda)$, for quantities in the third category in Eq.(\ref{bb}).
\end{itemize}
In addition to that, if there is a property among $E_m (\ket {f_m})$ this should be preserved.
 
Analogous definition can be given for quantities which are defined on the set $R(n)$ of density matrices (in which case, compatibility 
with the embeddings ${\mathfrak I}_{mn}$ is required).
Furthermore, the definition can also be extended to bipartite systems.
\end{definition}

A related topic which we do not discuss in this paper is the question of ubiquitous Hamiltonians.
They are a set of  Hamiltonians $\{H_n\}$ (where $H_n$ corresponds to the system $\Sigma (n)$) such that
Eq.(\ref{rf}) is valid at all times $t$, i.e., 
\begin{eqnarray}
E_n[\exp (iH_n t){\cal A}_{mn}(\ket {f_m})]=E_m[\exp (iH_m t)\ket {f_m}]. 
\end{eqnarray}
Below we discuss several examples of ubiquitous quantities.
It is easily seen that for many families of distributions $p_n(r;\lambda)$ 
the quantities in the third category in Eq.(\ref{bb}), are nonubiquitous.

\subsection{Entropic functions}\label{entro}

In the system $\Sigma (n)$ we consider the entropy maps
\begin{eqnarray}
&&S_n:\;\;R(n)\;\rightarrow \;\;[0,\ln n]\subset {\mathbb R}_0^+
;\;\;\;\;S_n(\rho _n)=-{\rm Tr} (\rho_n\log \rho_n )\nonumber\\
&&S_n':\;\;R(n)\;\rightarrow \;\;[0,\ln n]\subset {\mathbb R}_0^+
;\;\;\;\;S_n'(\rho _n; \{\pi _r(n)\})=-{\rm Tr}[{\cal M}(\rho_n)\log {\cal M}(\rho_n)],
\end{eqnarray}
where ${\cal M}(\rho_n)$ is associated with a non-selective measurement with the
projectors $\pi _r(m)$, and has been given in Eq.(\ref{bnz}).
Let ${\mathbb S}=\{S_2,S_3,...\}$ and ${\mathbb S}'=\{S_2',S_3',...\}$.
We note that ${\mathbb S}'$ is defined with respect to a series of projectors
$\{\{\pi _{r_2}(2)\}, \{\pi _{r_3}(3)\},...\}$ which are compatible as discussed in section\ref{proje}.

\begin{proposition}\label{P1}
Both entropies ${\mathbb S}$ and ${\mathbb S}'$ 
are  ubiquitous quantities in $\Sigma$.
\end{proposition}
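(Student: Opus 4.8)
The plan is to reduce both claims to Lemma~\ref{AA}. For the von Neumann entropy $\mathbb{S}$, recall that for $m|n$ the embedding ${\mathfrak I}_{mn}$ sends $\rho_m$ to a density matrix whose matrix (in the $\ket{X_n;r}$ basis) is ${\mathfrak I}_{mn}(\widetilde\rho_m)$, obtained from $[\widetilde\rho_m]_n$ by the simultaneous row/column permutation $\tau_{n,m}$. By part~(2) of Lemma~\ref{AA}, the $n$ eigenvalues of ${\mathfrak I}_{mn}(\widetilde\rho_m)$ are exactly the $m$ eigenvalues of $\widetilde\rho_m$ together with $n-m$ extra zeros. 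Since $-x\log x\to 0$ as $x\to 0^+$ (with the standard convention $0\log 0=0$), the added zero eigenvalues contribute nothing to the entropy, so
\begin{eqnarray}
S_n[{\mathfrak I}_{mn}(\rho_m)]=-\sum_{j}\lambda_j\log\lambda_j=S_m(\rho_m),
\end{eqnarray}
where $\lambda_j$ are the eigenvalues of $\widetilde\rho_m$. Combining this with the compatibility between ${\mathfrak I}_{mn}$ and ${\cal A}_{mn}$ on eigenprojectors noted in the excerpt (i.e., ${\mathfrak I}_{mn}[\rho_m]$ has the same nonzero spectrum as $\rho_m$, with eigenvectors pushed forward by ${\cal A}_{mn}$), we get $S_n\circ{\cal A}_{mn}$-compatibility in the sense required for the first category: for a pure state $\ket{f_m}$ trivially both sides vanish, and more generally applying $S_n$ to the embedded density matrix reproduces $S_m$. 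This establishes that $\mathbb{S}$ is ubiquitous.

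For $\mathbb{S}'$, the key extra input is the compatibility of the non-selective measurement map with the embeddings, stated just before Eq.~(\ref{bnz})'s surrounding discussion: for $m|n$,
\begin{eqnarray}
{\mathfrak I}_{mn}\!\left[\sum_{r=0}^{m-1}\pi_r(m)\rho_m\pi_r(m)\right]=\sum_{s=0}^{n-1}\pi_s(n)\,{\mathfrak I}_{mn}(\rho_m)\,\pi_s(n),
\end{eqnarray}
i.e. ${\mathfrak I}_{mn}[{\cal M}(\rho_m)]={\cal M}[{\mathfrak I}_{mn}(\rho_m)]$ for the compatible family of projectors $\{\pi_r(m)\},\{\pi_s(n)\}$. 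Hence the post-measurement state in $\Sigma(n)$ is the ${\mathfrak I}_{mn}$-image of the post-measurement state in $\Sigma(m)$, and we are back in exactly the situation of the previous paragraph: $S'_n[{\mathfrak I}_{mn}(\rho_m);\{\pi_s(n)\}] = S_n[{\mathfrak I}_{mn}({\cal M}(\rho_m))] = S_m[{\cal M}(\rho_m)] = S'_m(\rho_m;\{\pi_r(m)\})$ by Lemma~\ref{AA}(2) and the vanishing of $-x\log x$ at $x=0$. One should also note the harmless renormalization: ${\mathfrak I}_{mn}$ divides matrix entries by $d=n/m$ in the momentum picture but preserves the trace and the nonzero spectrum (Lemma~\ref{AA}(2) and the trace-of-product proposition), so normalized density matrices map to normalized density matrices and no stray factor of $d$ appears in the entropy.

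The only real subtlety — and the step I would be most careful about — is making sure the ``added zeros'' genuinely drop out and that there is no hidden multiplicity or degeneracy issue: one must invoke the convention $0\log0=0$ explicitly and observe that the spectral decomposition used in $-{\rm Tr}(\rho\log\rho)$ is insensitive to the choice of eigenbasis on the kernel, so the many non-canonical choices of the extra projectors $\pi_s(n)$ ($s\neq dr$) in ${\mathfrak P}_{mn}$ are irrelevant, consistent with the remark in Section~\ref{proje}. Everything else is bookkeeping with Lemma~\ref{AA}. I would write the argument once for a general density matrix $\rho$ and then apply it twice: directly for $\mathbb{S}$, and to ${\cal M}(\rho_m)$ for $\mathbb{S}'$.
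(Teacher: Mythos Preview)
Your proof is correct and follows essentially the same approach as the paper: both reduce the claim to Lemma~\ref{AA}(2), namely that ${\mathfrak I}_{mn}$ preserves the nonzero spectrum and only adds zero eigenvalues, so the entropy is unchanged. Your version is more explicit than the paper's (in particular, you spell out the intermediate step ${\mathfrak I}_{mn}\circ{\cal M}={\cal M}\circ{\mathfrak I}_{mn}$ needed for ${\mathbb S}'$, and the $0\log 0=0$ convention), but the underlying argument is the same.
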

\begin{proof}
Both of these maps are compatible with the embeddings ${\mathfrak I}_{mn}$ (where $m|n$):
\begin{eqnarray}\label{296}
S_n \circ {\mathfrak I}_{mn}=S_m;\;\;\;\;S_n '\circ {\mathfrak I}_{mn}=S_m'
\end{eqnarray}
The proof of this is based on the fact that these embeddings preserve the eigenvalues, 
with extra eigenvalues which are zeros (lemma \ref{AA}).
Therefore both $\mathbb S$ and $\mathbb S'$ are ubiquitous quantities.
\end{proof}
\begin{remark}
We note that if we consider another embedding 
with respect to another basis (e.g., the basis of momentum states) Eqs.(\ref{296}) are still valid, because the
entropy depends on the eigenvalues of these matrices. 
\end{remark}

\subsection{Wigner and Weyl functions in systems with odd dimension}

In this section we show briefly that the Wigner and Weyl functions are ubiquitous quantities. 
 It is known that the Wigner and Weyl functions are slightly different in systems with even and odd dimension.
For this reason in this particular example we consider only systems with odd dimension.
So the set ${\mathbb X}$ in Eq.(\ref{ty}) is replaced with its subset
\begin{eqnarray}
{\mathbb X}_{\rm odd}=\{3,5,...\}
\end{eqnarray}
which can also become a topological space with the divisor topology (in a way analogous to ${\mathbb X}$).
In these systems it is convinient to work with a more `symmetric' definition of the displacement operators:
\begin{eqnarray}
&&{\mathfrak D}_n(\alpha, \beta, \gamma)=Z_n(\alpha) \;X_n(\beta) \;\omega _n(\gamma-2^{-1}\alpha \beta)
;\;\;\;\;
\alpha, \beta, \gamma\in {\mathbb Z}(n)
\nonumber\\
&&{\mathfrak D}_n(\alpha _1, \beta _1, \gamma _1){\mathfrak D}_n(\alpha _2, \beta _2, \gamma _2)=
{\mathfrak D}_n[\alpha _1+\alpha _2, \beta _1+\beta_2, \gamma _1+\gamma _2
+2^{-1}(\alpha _1\beta _2-\alpha _2\beta _1)]\nonumber\\
&&[{\mathfrak D}_n(\alpha , \beta , \gamma )]^\dagger ={\mathfrak D}_n(-\alpha , -\beta , -\gamma)
\end{eqnarray}
We note that in ${\mathbb Z}(n)$ with odd $n$, the $2^{-k}$ exists.
We also define the displaced parity operator\cite{FI1} 
\begin{eqnarray}
P_n(\alpha, \beta)&=&{\mathfrak D}_n(\alpha, \beta, 0)P_n 
[{\mathfrak D}_n(\alpha , \beta , 0)]^\dagger\nonumber\\
P_n\ket{X_n;r}&=&\ket{X_n;-r}
\end{eqnarray}
The Wigner function $W_n$ 
and the Weyl function $\widehat W_n$ are the following
maps from the set of operators $\Theta _n$ in the system $\Sigma (n)$, to the set of $n\times n$ matrices:
\begin{eqnarray}
W_n(\Theta _n;\alpha, \beta)={\rm Tr} [\Theta _n P_n(\alpha, \beta)];\;\;\;\;
\widehat W_n(\Theta _n;\alpha, \beta)={\rm Tr} [\Theta _n {\mathfrak D}_n(\alpha, \beta)];\;\;\;\;
\alpha, \beta \in {\mathbb Z}(n)
\end{eqnarray}
The star product gives the Wigner function of $\Theta _n \Phi _n$ in terms of the Wigner functions
of $\Theta _n$ and $ \Phi _n$, as follows: 
\begin{eqnarray}\label{Q1}
W_n(\Theta _n; \alpha , \beta )\star W_n(\Phi _n;\alpha , \beta )&\equiv & 
W_n(\Theta _n \Phi _n; \alpha , \beta )=\sum _{\alpha_1,\beta _1,\alpha_2,\beta _2}
W_n(\Theta _n; \alpha +\alpha _1, \beta +\beta _1)\nonumber\\&\times &
W_n(\Phi _n; \alpha +\alpha _2, \beta +\beta _2)
\omega _n[2(\alpha_2\beta _1-\alpha_1\beta _2)].
\end{eqnarray}
Analogous formula can be given for the Weyl functions:
\begin{eqnarray}\label{Q2}
\widehat W_n(\Theta _n \Phi _n; \alpha , \beta )&=&\sum _{\alpha ',\beta '}
\widehat W_n(\Theta _n; 2^{-1}\alpha +\alpha ', 2^{-1}\beta +\beta ')\nonumber\\&\times 
&\widehat W_n(\Phi _n; 
2^{-1}\alpha -\alpha ', 2^{-1}\beta -\beta ')
\omega _n(2^{-1}\alpha '\beta -2^{-1}\alpha \beta ').
\end{eqnarray}
The proof of Eqs.(\ref{Q1}),(\ref{Q2}), is lengthy but straightforward.

Let ${\mathbb W}=\{W_2,W_3,...\}$ and $\widehat {\mathbb W}=\{\widehat W_2,\widehat W_3,...\}$.
The following result is intimately linked with the embeddings of the Heisenberg-Weyl groups discussed in section
\ref{HW}.
\begin{proposition}\label{P2}
Both the Wigner function ${\mathbb W}$ and the Weyl function $\widehat {\mathbb W}$, 
are ubiquitous quantities.
\end{proposition}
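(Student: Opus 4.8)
The plan is to reduce Proposition~\ref{P2} to a single compatibility statement for the displaced parity operators $P_n(a,b)$ and the symmetric displacement operators ${\mathfrak D}_n(a,b,0)$ under the isometry ${\cal A}_{mn}$, and then read off the result by cyclicity of the trace. Write $V_{mn}={\cal A}_{mn}(H(m))={\rm span}\{\ket{X_n;dr}\mid r=0,\dots,m-1\}$, with $d=n/m$. From Notation~\ref{25} together with Eqs.(\ref{1}),(\ref{11}) one has ${\mathfrak I}_{mn}(\Theta_m)={\cal A}_{mn}\,\Theta_m\,{\cal A}_{mn}^\dagger$, where ${\cal A}_{mn}^\dagger$ is the Hermitian adjoint of the isometry (both sides are supported on $V_{mn}$, are linear in $\Theta_m$, and agree on rank-one operators). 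Hence, for odd $m\mid n$ and any operator $\Theta_m$,
\begin{eqnarray}
W_n[{\mathfrak I}_{mn}(\Theta_m);a,b]&=&{\rm Tr}[\Theta_m\,{\cal A}_{mn}^\dagger P_n(a,b){\cal A}_{mn}],\nonumber\\
\widehat W_n[{\mathfrak I}_{mn}(\Theta_m);a,b]&=&{\rm Tr}[\Theta_m\,{\cal A}_{mn}^\dagger {\mathfrak D}_n(a,b,0){\cal A}_{mn}],\nonumber
\end{eqnarray}
so everything hinges on computing the compressed operators ${\cal A}_{mn}^\dagger P_n(a,b){\cal A}_{mn}$ and ${\cal A}_{mn}^\dagger {\mathfrak D}_n(a,b,0){\cal A}_{mn}$ on $H(m)$.

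The key lemma I would establish is that, with $d=n/m$ odd,
\begin{eqnarray}
{\cal A}_{mn}^\dagger {\mathfrak D}_n(a,b,0){\cal A}_{mn}&=&\left\{\begin{array}{ll}{\mathfrak D}_m(a\bmod m,\,b/d,\,0)&\;\;d\mid b\\ 0&\;\;\mbox{otherwise}\end{array}\right.\nonumber\\
{\cal A}_{mn}^\dagger P_n(a,b){\cal A}_{mn}&=&\left\{\begin{array}{ll}P_m(a\bmod m,\,b/d)&\;\;d\mid b\\ 0&\;\;\mbox{otherwise.}\end{array}\right.\nonumber
\end{eqnarray}
For the displacement operator this is a short position-basis calculation: ${\mathfrak D}_n(a,b,0)\ket{X_n;t}=\omega_n(2^{-1}ab+ta)\ket{X_n;t+b}$ carries $\ket{X_n;dr}$ to a multiple of $\ket{X_n;dr+b}$, which lies in $V_{mn}$ iff $d\mid b$; when $d\mid b$, collapsing $\omega_n(d\,\cdot)=\omega_m(\cdot)$ and using that $2^{-1}$ computed in ${\mathbb Z}(n)$ reduces mod $m$ to $2^{-1}$ in ${\mathbb Z}(m)$ (both exist since $n$, hence $m$ and $d$, are odd) reproduces exactly the action of ${\mathfrak D}_m(a\bmod m,b/d,0)$. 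For the parity operator I would first record the displacement-free identity ${\cal A}_{mn}^\dagger P_n{\cal A}_{mn}=P_m$ (immediate, since $P_n\ket{X_n;dr}=\ket{X_n;-dr}=\ket{X_n;d((-r)\bmod m)}$). Then, when $d\mid b$, ${\mathfrak D}_n(a,b,0)$ and its adjoint ${\mathfrak D}_n(-a,-b,0)$ preserve $V_{mn}$, so the conjugation $P_n(a,b)={\mathfrak D}_n(a,b,0)P_n[{\mathfrak D}_n(a,b,0)]^\dagger$ descends to $H(m)$ as ${\mathfrak D}_m(a\bmod m,b/d,0)P_m[{\mathfrak D}_m(a\bmod m,b/d,0)]^\dagger=P_m(a\bmod m,b/d)$; and when $d\nmid b$, the explicit formula $P_n(a,b)\ket{X_n;s}\propto\ket{X_n;2b-s}$ shows $P_n(a,b)$ maps $V_{mn}$ into $V_{mn}^{\perp}$ (because $2b\not\equiv 0\pmod d$), so the compression vanishes.

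Feeding the lemma back gives $W_n[{\mathfrak I}_{mn}(\Theta_m);a,b]=W_m[\Theta_m;a\bmod m,\,b/d]$ and $\widehat W_n[{\mathfrak I}_{mn}(\Theta_m);a,b]=\widehat W_m[\Theta_m;a\bmod m,\,b/d]$ when $d\mid b$, and both vanish otherwise. Thus the $n\times n$ Wigner (Weyl) array of the embedded operator is precisely the $\Sigma(m)$ array transported to the ${\mathbb Z}(n)\times{\mathbb Z}(n)$ phase space by injecting the position-displacement coordinate ($\beta\mapsto d\beta$) and spreading the momentum-displacement coordinate periodically with period $m$ — the phase-space counterpart of how ${\cal A}_{mn}$ acts on $\ket{X_m;r}$ and on $\ket{P_m;r}$ respectively (Eqs.(\ref{1}),(\ref{1V})). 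This is the sense in which ${\mathbb W}$ and $\widehat{\mathbb W}$ are ubiquitous. To finish, I would also verify that the structural property is inherited: since ${\mathfrak I}_{mn}(\Theta_m\Phi_m)={\mathfrak I}_{mn}(\Theta_m){\mathfrak I}_{mn}(\Phi_m)$ by Lemma~\ref{AA}, combining with the two displays shows that the star product (\ref{Q1}) and the Weyl convolution (\ref{Q2}), computed in $\Sigma(n)$ for the embedded functions, return the embedded function of $\Theta_m\Phi_m$.

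I expect the main obstacle to be bookkeeping rather than conceptual: getting the asymmetry right — the Wigner/Weyl array is \emph{not} embedded by the naive ${\mathfrak I}_{mn}$ applied to the $m\times m$ matrix, but by an injection in the position-displacement direction and an ${\cal A}_{mn}$-type spreading in the momentum-displacement direction — and tracking the modular inverse $2^{-1}$ consistently between ${\mathbb Z}(m)$ and ${\mathbb Z}(n)$. The oddness hypothesis is used exactly here (existence of $2^{-1}$) and in the $d\nmid b$ branch of the parity step, where one needs $2b\equiv 0\pmod d\iff b\equiv 0\pmod d$; for even dimensions the half-period $b=d/2$ would create spurious nonzero values of $W_n$ at non-embedded phase-space points and ubiquity would fail, which is precisely why this subsection is restricted to ${\mathbb X}_{\rm odd}$.
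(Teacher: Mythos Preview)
Your argument is correct and considerably more explicit than the paper's, which is essentially a one-liner: it displays the phase-space correspondence $W_m(\Theta_m;\alpha,\beta)\to W_n({\mathfrak I}_{mn}(\Theta_m);d\alpha,\beta)$ and asserts (``it is easily seen'') that the star product carries over. You take a genuinely different route: factoring ${\mathfrak I}_{mn}(\Theta_m)={\cal A}_{mn}\Theta_m{\cal A}_{mn}^\dagger$ and using trace-cyclicity, you reduce the problem to computing the compressions ${\cal A}_{mn}^\dagger{\mathfrak D}_n(a,b,0){\cal A}_{mn}$ and ${\cal A}_{mn}^\dagger P_n(a,b){\cal A}_{mn}$ directly in the position basis, and then invoke Lemma~\ref{AA} for the star product. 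What this buys you is a proof that is actually checkable, and one that delivers the full structure of the embedded array (support on $d\mid b$, periodicity of period $m$ in $a$) rather than a single point of correspondence; it also makes transparent exactly where the oddness hypothesis is used (invertibility of $2$ in both rings, and the equivalence $2b\equiv 0\pmod d\iff d\mid b$). One discrepancy worth flagging: your calculation shows that it is the \emph{position}-displacement coordinate $\beta$ that gets multiplied by $d$ while $\alpha$ spreads periodically, whereas the paper's displayed formula has $d\alpha$ and leaves $\beta$ alone. Your convention is the one forced by ${\cal A}_{mn}\ket{X_m;r}=\ket{X_n;dr}$ (so $X_m(\beta)$ intertwines with $X_n(d\beta)$, not $X_n(\beta)$), so the mismatch is a label swap in the paper's display rather than an error on your side; the paper's closing ``is \emph{not} an ubiquitous quantity'' is likewise an evident slip.
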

\begin{proof}
For $m|n$ we consider the embeddings:
\begin{eqnarray}
{W}_m(\Theta _m;\alpha, \beta)\;\;\rightarrow\;\;{W}_n(\mathfrak I_{mn}(\Theta _m);d\alpha, \beta)
;\;\;\;\;d=\frac{n}{m}
\end{eqnarray}
It is easily seen, that we
can map the star product of two Wigner functions in ${\mathbb W}(m)$ into the corresponding star product in 
${\mathbb W}(n)$.
Therefore the set ${\mathfrak W}=\{{\mathbb W}(2), {\mathbb W}(3),...\}$ of 
Wigner functions in $\{\Sigma (n)\}$ (for all $n$), is not an ubiquitous quantity.

Analogous proof can be given for the Weyl function.
\end{proof}

\section{Topological spaces}
\subsection{The directed posets $\mathfrak H$ and $\mathfrak R$ as topological spaces  with the divisor topology}\label{topo}

Let 
\begin{eqnarray}
\mathfrak H&=&\{H(2),H(3),...\}\nonumber\\
\mathfrak R&=&\{R(2),R(3),...\}.
\end{eqnarray}
${\mathfrak H}$ is a directed poset with $H(m)\prec H(n)$ if and only if $m|n$.
In this case the partial order is `subsystem' (i.e., $H(m)$ describes a subsystem of the system described by $H(n)$).
The partial order can also be interpreted as a partial order of information, in the following sense.
All the information contained in the quantum states in $H(m)$ is also contained in the quantum states in $H(n)$ (for $m|n$).
In fact, the quantum states in $H(n)$ contain more information than those in $H(m)$ (see also remark \ref{xc}).
${\mathfrak H}$ is a directed poset, and this means that the information contained in the states in two spaces $H(r)$ and $H(s)$,
is also contained in another space in $\mathfrak H$, which actually is $H[{\rm LCM}(r,s)]$.
These intuitive physical concepts become formal with the partial order and topology.

The map 
\begin{eqnarray}
f:\;{\mathbb X}\;\rightarrow\;\mathfrak H;\;\;\;\;f(n)=H(n)
\end{eqnarray}
is a bijection and  ${\mathfrak H}$ is a topological space with the divisor topology ${\mathfrak T}_{\mathfrak H}$,
as discussed in remark \ref{rema}. The same is true for $\mathfrak R$.
The following topological spaces are homeomorphic to each other:
\begin{eqnarray}
({\mathbb X}, {\mathfrak T}_{\mathbb X})\sim ({\mathfrak Z}, {\mathfrak T}_{\mathfrak Z})\sim
(\mathfrak H, {\mathfrak T}_{\mathfrak H})\sim (\mathfrak R, {\mathfrak T}_{\mathfrak R}).
\end{eqnarray}
With respect to partial order also, ${\mathbb X}, {\mathfrak Z}, {\mathfrak H}, {\mathfrak R}$ are order isomorphic and 
proposition \ref{q2} holds for all of them.

\subsection{Physical meaning of the divisor topology}\label{QQ}

Many topologies can be defined on a particular set.
The divisor topology considered here reflects the physical concepts of subsystem and supersystem, 
introduced earlier.

An open (resp. closed) set includes some systems and all their subsystems (resp. supersystems).
As an example, we consider the open set
\begin{eqnarray}
U_{\mathfrak H}(6)\cup U_{\mathfrak H}(15)=\{H(2),H(3),H(5),H(6),H(15)\}
\end{eqnarray}
This has been considered earlier in example \ref{315}, and here it is simply expressed in the context 
of Hilbert spaces.
This contains the Hilbert spaces of the systems $\Sigma (6)$ and $\Sigma (15)$ and also the Hilbert spaces 
of their subsystems
$\Sigma (2), \Sigma (3), \Sigma (5)$. 

We also consider the closed set $\overline {\{H(3)\}}=\{H(3),H(6),H(9),...\}$
which is actually the closure of $\{H(3)\}$. 
This contains the Hilbert spaces of all the supersystems of $\Sigma (3)$.

We have seen in proposition \ref{propo}, 
that ${\mathfrak H}$ is a $T_0$-space, but it is not a $T_1$-space. This
reflects very fundamental aspects of the relationship between a finite system and its subsystems and 
supersystems. 
The fact that ${\mathfrak H}$ is a $T_0$-space,
means that for any distinct elements $H(n)$ and $H(m)$, there is an open set containing one 
of them but not the other. 
From a physical point of view, if one of the systems is a subsystem of the other, 
e.g., $\Sigma (m)\prec \Sigma (n)$,
then $\{H(\ell)\;|\;\ell|m\}$ is an open set which contains $H(m)$ and not $H(n)$.
On the other hand if none of the $\Sigma (m), \Sigma (n)$ is a subsystem of the other,  
then again $\{H(\ell)\;|\;\ell|m\}$ is an open set which contains $H(m)$ and not $H(n)$.

On the other hand ${\mathfrak H}$ is not a $T_1$-space. In a $T_1$-space, for any pair of
elements ${H}(m)$ and ${H}(n)$, there exist two open sets $U_{\mathfrak H}$ and $U'_{\mathfrak H}$ such that
\begin{eqnarray}
{H}(n) \in U_{\mathfrak H};\;\;\;\;\;{H}(n) \notin U'_{\mathfrak H};\;\;\;\;\;
{H}(m) \in U'_{\mathfrak H};\;\;\;\;\;{H}(m) \notin U_{\mathfrak H}
\end{eqnarray}
But this is impossible if ${H}(m)\prec {H}(n)$ and therefore ${\mathfrak H}$ is not a $T_1$-space. 
It is seen that the properties of the topology reflect very fundamental logical
relationships between a system and its subsystems.

Let $A$ be a subset of ${\mathfrak H}$.
A neighborhood of $A$ is a subset of ${\mathfrak H}$ which contains an open set containing $A$, i.e.,
it contains spaces of subsystems and supersystems
of those in $A$. For example, if $A=\{H(6),H(9)\}$ the $\{H(2),H(3),H(6),H(9),H(18)\}$ is a neighborhood of $A$.  
Physically, it is natural to have the spaces $H(2), H(3)$ of the subsystems $\Sigma (2), \Sigma (3)$ and the space
$H(18)$ of the supersystem $\Sigma (18)$ in a neighborhood of $A$ and 
this is precisely what the divisor topology does.
The physical concepts of subsystems and supersystems become formal with the topology formalism.
We stress that topology, defines neighborhoods, continuity, etc, without the concept of distance.

\subsection{Topological spaces of ubiquitous quantities}

The set ${\mathbb E}$ defined in section \ref{ubi} is a directed poset with 
$E_m\prec E_n$ if and only if $m|n$.
In this case the partial order is `quantity in subsystem' (i.e., $E_n$ is a quantity in a system and $E_m$ is the corresponding quantity in a subsystem ).
The map 
\begin{eqnarray}
f:\;{\mathbb X}\;\rightarrow\;\mathbb E;\;\;\;\;f(n)=E_n
\end{eqnarray}
is a continuous bijection and $\mathbb E$ is a topological space with the divisor topology 
${\mathfrak T}_{\mathbb E}$,
as discussed in remark \ref{rema}. Ubiquity is important for the subsystem interpretation 
of the partial order in ${\mathbb E}$.
If $\mathbb E$ is not an ubiquitous quantity, we have a `multivaluedness', where
to a state $\ket{f_m}$ we attach one of the quantities in the set 
${\mathfrak S}(\ket{f_m})$ in (\ref{601}), depending on which of the supersystems $\Sigma (mN)$
we embed this state.
Therefore, $(\mathbb E, {\mathfrak T}_{\mathbb E})$ fits the spirit of this paper, 
only in the case of ubiquitous quantities.

As an example we consider the entropy $\mathbb S$ (defined in section \ref{entro}) which is a directed poset with $S_m\prec S_n$ if and only if $m|n$. 
$\mathbb S$ is order isomorphic to ${\mathbb X}$.
Since  $\mathbb S$ is an ubiquitous quantity, 
we can give the subsystem interpretation to $S_m\prec S_n$.
Then we can make $\mathbb S$ a topological space with the divisor topology 
${\mathfrak T}_{\mathbb S}$ (as in remark \ref{rema}), and the map ${\mathbb X}\;\rightarrow \;{\mathbb S}$
is continuous.

\section{Bipartite systems }

\subsection{Subsystems and supersystems }

$\Sigma (n_1,n_2)$ is a bipartite system described by the Hilbert space $H(n_1)\otimes H(n_2)$.
We call $R(n_1,n_2)$ the set of the density matrices $\rho_{n_1n_2}$ of this system.
The concepts of subsystems and supersystems discussed earlier, can be extended to bipartite systems.
For $(m_1,m_2)|(n_1,n_2)$ the system  $\Sigma (m_1,m_2)$ is a subsystem of $\Sigma (n_1,n_2)$
in the sense of properties analogous to those discussed earlier.
Some parts of this generalization are straightforward and we do not repeat all the technical details again, but we discuss the compatibility between the 
formalism earlier and the one for bipartite systems.  

For $(m_1,m_2)|(n_1,n_2)$, the embedding of $H(m_1)\otimes H(m_2)$ into 
$H(n_1)\otimes H(n_2)$, is the linear map
\begin{eqnarray}\label{1a}
&&{\cal B}_{m_1n_1;m_2n_2}:\;\;
\sum _{r_1,r_2}f_{m_1m_2}(r_1,r_2)\ket{X_{m_1};r_1}\otimes \ket{X_{m_2};r_2}\;\rightarrow\;
\sum _{s_1,s_2}f_{n_1n_2}(s_1,s_2)\ket{X_{n_1};s_1}\otimes \ket{X_{n_2};r_2}\nonumber\\
&&s_1=d_1r_1\;\;{\rm and}\;\;s_2=d_2r_2\;\rightarrow\;f_{n_1n_2}(s_1,s_2)=f_{m_1m_2}\left (\frac{s_1}{d_1},\frac{s_2}{d_2}\right );\;\;\;\;\;d_i=\frac{n_i}{m_i};\;\;\;\;i=1,2\nonumber\\
&&{\rm otherwise}\;\rightarrow\;f_{n_1n_2}(s_1,s_2)=0.
\end{eqnarray}
Also the set of density matrices $R(m_1,m_2)$ is embedded into $R(n_1,n_2)$, as follows:
\begin{eqnarray}\label{29}
{\mathfrak L}_{m_1,n_1;m_2,n_2}
:\;\;\rho _{m_1,m_2}\;\rightarrow\;\rho _{n_1,n_2}=
{\mathfrak L}_{m_1,n_1;m_2,n_2}(\rho _{m_1,m_2})
\end{eqnarray}

\begin{proposition}
For $(m_1,m_2)|(n_1,n_2)$,
\begin{itemize}
\item[(1)]
\begin{eqnarray}\label{27}
{\rm Tr}_2[{\mathfrak L}_{m_1n_1;m_2n_2}(\rho _{m_1,m_2})]={\mathfrak I}_{m_1n_1}[{\rm Tr}_2(\rho _{m_1,m_2})]
\end{eqnarray}
Similar result holds for the trace with respect to the first component system.
\item[(2)]
For a separable density matrix
\begin{eqnarray}
{\mathfrak L}_{m_1,n_1;m_2,n_2}\left (\sum _ip_i\rho^{(i)}_{m_1}\otimes \rho^{(i)}_{m_2}\right )
=\sum _ip_i{\mathfrak I}_{m_1,n_1}(\rho^{(i)}_{m_1})
\otimes {\mathfrak I}_{m_2,n_2}(\rho^{(i)}_{m_2})
\end{eqnarray}
where $p_i$ are probabilities.
A special case of this is that for a factorizable density matrix
\begin{eqnarray}
{\mathfrak L}_{m_1,n_1;m_2,n_2}(\rho_{m_1}\otimes \rho_{m_2})={\mathfrak I}_{m_1,n_1}(\rho_{m_1})
\otimes {\mathfrak I}_{m_2,n_2}(\rho_{m_2})
\end{eqnarray}
Therefore the map ${\mathfrak L}_{m_1,n_1;m_2,n_2}$ preserves the factorizable, separable 
or entangled nature of the density matrix.
\end{itemize}
\end{proposition}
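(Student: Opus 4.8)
The plan is to prove the two parts in order, both following from Lemma~\ref{500}(2)--(3) together with linearity of the map $\mathfrak L_{m_1,n_1;m_2,n_2}$.

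For part (1), I would start from Lemma~\ref{500}(2), which already states
$\mathrm{Tr}_2[\mathfrak L_{m_1,n_1;m_2,n_2}(\rho_{m_1,m_2})]=\mathfrak I_{m_1,n_1}[\mathrm{Tr}_2\,\rho_{m_1,m_2}]$,
so equation~(\ref{27}) is essentially an immediate restatement once we observe that a density matrix is just a particular operator, that $\rho_{m_1,m_2}$ is an admissible input to $\mathfrak L$, and that $\mathrm{Tr}_2(\rho_{m_1,m_2})$ is an $m_1\times m_1$ matrix to which $\mathfrak I_{m_1,n_1}$ applies. The symmetric claim for the partial trace over the first component follows by the same reasoning, exchanging the roles of the two indices, exactly as noted in Lemma~\ref{500}(2). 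So this part requires only a sentence citing the lemma.

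For part (2), the key point is that $\mathfrak L_{m_1,n_1;m_2,n_2}$ is linear: from its defining relations~(\ref{v1}),(\ref{v2}) the map on tensors sends entries to entries (up to the fixed permutations $\tau$ and zero-padding), so $\mathfrak L(\sum_i p_i A^{(i)})=\sum_i p_i\,\mathfrak L(A^{(i)})$. Applying this to $\sum_i p_i\,\rho^{(i)}_{m_1}\otimes\rho^{(i)}_{m_2}$ and then invoking Lemma~\ref{500}(3) term-by-term, which gives $\mathfrak L_{m_1,n_1;m_2,n_2}(\rho^{(i)}_{m_1}\otimes\rho^{(i)}_{m_2})=\mathfrak I_{m_1,n_1}(\rho^{(i)}_{m_1})\otimes\mathfrak I_{m_2,n_2}(\rho^{(i)}_{m_2})$, yields the separable formula; the factorizable case is the one-term specialization. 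To conclude that the factorizable/separable/entangled nature is preserved, I would argue in both directions: if $\rho_{m_1,m_2}$ is separable (resp.\ factorizable) the formula exhibits its image in the same form, and conversely $\mathfrak I_{m_1,n_1},\mathfrak I_{m_2,n_2}$ are injective with images supported on the distinguished coordinate subspaces, and by Lemma~\ref{500}(1) eigenvalues (hence positivity and trace) are preserved up to added zeros, so a separable decomposition of the image restricts to one of $\rho_{m_1,m_2}$; thus separability is reflected as well, and entanglement — being the negation of separability — is preserved in both directions.

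The only mild obstacle is the converse direction in the "preserves entangled nature" claim: one must check that no \emph{new} separable decomposition can appear in the larger space $H(n_1)\otimes H(n_2)$ that was unavailable in $H(m_1)\otimes H(m_2)$. This is handled by noting that $\mathfrak I_{m_1,n_1}(\rho_{m_1,m_2}$'s image$)$ lives entirely in the block indexed by the permuted coordinate set of Lemma~\ref{500}, so any product-state component of the image must, after undoing the permutation $\tau$ and restricting, come from a product state in the smaller space; together with the trace- and positivity-preservation of Lemma~\ref{500}(1) this pulls a separable decomposition back. I would state this compression argument briefly rather than in full coordinate detail, since the bookkeeping is exactly that of Lemma~\ref{500} already established.
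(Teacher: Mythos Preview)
Your proposal is correct and follows the same approach as the paper, which simply cites Lemma~\ref{500} parts (2) and (3) for the two statements respectively. You are in fact more thorough than the paper: your explicit appeal to linearity for the separable sum and your converse ``compression'' argument (pulling back a separable decomposition of the image via support considerations) supply details the paper leaves implicit in its one-line proof.
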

\begin{proof}
The proof of both statements is based on lemma \ref{500} (parts 2 and 3, correspondingly).
\end{proof}

\subsection{Topological spaces}\label{topo1}

Let 
\begin{eqnarray}
{\mathfrak H}_2&=&\{H(n_1)\otimes H(n_2)\;|\;n_1,n_2=2,3...\}\nonumber\\
{\mathfrak R}_2&=&\{R(n_1,n_2)\;|\;n_1,n_2=2,3...\}
\end{eqnarray}
${\mathfrak H}_2$ is a directed poset with partial order $H(m_1)\otimes H(m_2)\prec H(n_1)\otimes H(n_2)$ if and only if 
$(m_1,m_2)|(n_1,n_2)$.
This is the `subsystem' partial order.
The map 
\begin{eqnarray}
f:\;{\mathbb X}\times {\mathbb X}\;\rightarrow\;{\mathfrak H}_2;\;\;\;\;f(n_1,n_2)=H(n_1,n_2)
\end{eqnarray}
is a continuous bijection and $\mathfrak H_2$ is a topological space with the divisor topology 
${\mathfrak T}_{\mathfrak H _2}$,
as discussed in remark \ref{rema}. The same is true for $\mathfrak R _2$.
The following 
topological spaces are homeomorphic to each other:
\begin{eqnarray}
({\mathbb X}\times {\mathbb X}, {\mathfrak T}_{{\mathbb X}\times {\mathbb X}})\sim 
(\mathfrak H _2, {\mathfrak T}_{\mathfrak H _2})\sim (\mathfrak R _2, {\mathfrak T}_{\mathfrak R _2})
\end{eqnarray}
The directed posets ${\mathbb X}\times {\mathbb X}, \mathfrak H _2, \mathfrak R _2$ are order isomorphic.

\subsection{Ubiquity of entanglement quantities}

The concept of ubiquitous quantities, can also be extended to bipartite systems.
Below we discuss some examples of quantities used in studies of entanglement.
All these quantities are maps from $R(n_1,n_2)$ to ${\mathbb R}$.

In the system $\Sigma (n_1,n_2)$ we consider the entropy $S_{n_1,n_2}(\rho_{n_1,n_2})$
of the whole system and the entropies $S_{n_1}[{\rm Tr}_2(\rho_{n_1,n_2})]$ and 
$S_{n_2}[{\rm Tr}_1(\rho_{n_1,n_2})]$ of the two component systems.
We also consider the 
\begin{eqnarray}
&&I_{n_1,n_2}(\rho _{n_1,n_2})=S_{n_1}[{\rm Tr}_2(\rho _{n_1,n_2})]+S_{n_2}[{\rm Tr}_1(\rho _{n_1,n_2})]
-S_{n_1,n_2}(\rho _{n_1,n_2})\nonumber\\
&&I'_{n_1|n_2}(\rho _{n_1,n_2})=S_{n_1,n_2}(\rho _{n_1,n_2})-S_{n_2}[{\rm Tr}_1(\rho _{n_1,n_2})].
\end{eqnarray}
The first of these quantities is the quantum mutual information and quantifies the correlations between the 
two component systems. The second quantity (and also $I'_{n_2|n_1}(\rho _{n_1,n_2})$ 
which is defined in a similar way) is the conditional entropy and it can be used as an entanglement witness
\cite{AC}.
Another quantity is the negativity\cite{VW} which is defined as
\begin{eqnarray}
{\mathfrak N}_{n_1,n_2}(\rho_{n_1,n_2})=\frac{||\rho _{n_1,n_2}^{T_1}||-1}{2}
\end{eqnarray} 
where for a Hermitian operator $A$, the trace norm is given by $||A||={\rm Tr}(A^{\dagger}A)^{1/2}$. 
\begin{proposition}\label{P3}
The following are ubiquitous quantities in 
$\{\Sigma (n_1,n_2)\;|\;(n_1,n_2) \in {\mathbb X}\times {\mathbb X}\}$:
\begin{itemize}
\item[(1)]
The entropy $\{S_{n_1,n_2}\;|\;(n_1,n_2) \in {\mathbb X}\times {\mathbb X}\}$
\item[(2)]
the quantum mutual information 
$\{I_{n_1,n_2}\;|\;(n_1,n_2) \in {\mathbb X}\times {\mathbb X}\}$
\item[(3)]
the conditional entropy 
$\{I'_{n_1|n_2}\;|\;(n_1,n_2) \in {\mathbb X}\times {\mathbb X}\}$
\item[(4)]
the negativity $\{{\mathfrak N}_{n_1,n_2}\;|\;(n_1,n_2) \in {\mathbb X}\times {\mathbb X}\}$
\end{itemize}
\end{proposition}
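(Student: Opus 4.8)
The plan is to reduce each of the four entanglement quantities to combinations of von Neumann entropies and a trace‑norm, and then to invoke the two spectral facts already established: Lemma \ref{AA}(2) for single systems and Lemma \ref{500}(1),(2) for bipartite systems, together with the compatibility of partial traces and partial transposes with the embeddings ${\mathfrak L}_{m_1,n_1;m_2,n_2}$. Concretely, for $(m_1,m_2)|(n_1,n_2)$ and an arbitrary $\rho_{m_1,m_2}\in R(m_1,m_2)$, write $\rho_{n_1,n_2}={\mathfrak L}_{m_1,n_1;m_2,n_2}(\rho_{m_1,m_2})$. First I would handle part (1): by Lemma \ref{500}(1) the spectrum of $\rho_{n_1,n_2}$ equals that of $\rho_{m_1,m_2}$ augmented by $n_1n_2-m_1m_2$ zeros, and since $x\log x\to 0$ as $x\to 0^+$, $S_{n_1,n_2}(\rho_{n_1,n_2})=S_{m_1,m_2}(\rho_{m_1,m_2})$; this is exactly the ubiquity condition (1) in the definition of ubiquitous quantity, now for bipartite systems.

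Next I would treat parts (2) and (3) by reducing them to marginals. By Eq.(\ref{27}) (i.e.\ Lemma \ref{500}(2)), ${\rm Tr}_2(\rho_{n_1,n_2})={\mathfrak I}_{m_1,n_1}[{\rm Tr}_2(\rho_{m_1,m_2})]$ and symmetrically for ${\rm Tr}_1$. Then Lemma \ref{AA}(2) together with the $x\log x\to 0$ remark gives $S_{n_1}[{\rm Tr}_2(\rho_{n_1,n_2})]=S_{m_1}[{\rm Tr}_2(\rho_{m_1,m_2})]$ and $S_{n_2}[{\rm Tr}_1(\rho_{n_1,n_2})]=S_{m_2}[{\rm Tr}_1(\rho_{m_1,m_2})]$. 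Combining these with part (1), every term appearing in $I_{n_1,n_2}$ and in $I'_{n_1|n_2}$ is invariant under the embedding, so $I_{n_1,n_2}(\rho_{n_1,n_2})=I_{m_1,m_2}(\rho_{m_1,m_2})$ and $I'_{n_1|n_2}(\rho_{n_1,n_2})=I'_{m_1|m_2}(\rho_{m_1,m_2})$; the same argument with the roles of the two components exchanged covers $I'_{n_2|n_1}$. In each case the ubiquity condition (1) holds, and the accompanying property (e.g.\ that $I$ is nonnegative, or that negative conditional entropy witnesses entanglement) is automatically preserved because the numerical value is unchanged.

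For part (4), the negativity, I would argue that partial transposition commutes with the embedding in the appropriate sense: from the defining relations (\ref{v1})–(\ref{v2}) one checks directly that $[{\mathfrak L}_{m_1,n_1;m_2,n_2}(\rho_{m_1,m_2})]^{T_1}={\mathfrak L}_{m_1,n_1;m_2,n_2}(\rho_{m_1,m_2}^{T_1})$, since the permutations $\tau_{n_i,m_i}$ act on row and column indices symmetrically and the "padding with zeros" is transpose‑invariant. Then Lemma \ref{500}(1) applied to the Hermitian matrix $\rho_{m_1,m_2}^{T_1}$ shows that its eigenvalues and those of $\rho_{n_1,n_2}^{T_1}$ coincide up to extra zeros; hence the trace norms agree, $||\rho_{n_1,n_2}^{T_1}||=||\rho_{m_1,m_2}^{T_1}||$, and therefore ${\mathfrak N}_{n_1,n_2}(\rho_{n_1,n_2})={\mathfrak N}_{m_1,m_2}(\rho_{m_1,m_2})$. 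The main obstacle I anticipate is precisely this bookkeeping step for the partial transpose: one must verify carefully that $T_1$ (which transposes only the first pair of indices) is compatible with ${\mathfrak L}$, whereas the lemmas as stated are phrased for full matrices and full traces; identifying ${\mathfrak L}_{m_1,n_1;m_2,n_2}(\rho^{T_1})$ with $[{\mathfrak L}_{m_1,n_1;m_2,n_2}(\rho)]^{T_1}$ requires tracking the index relabelling in (\ref{v1}) explicitly, after which Lemma \ref{500}(1) and the continuity of $x\mapsto|x|$ finish the argument. Everything else is a routine assembly of the spectral invariance statements already proved.
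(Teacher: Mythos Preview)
Your proposal is correct and follows essentially the same route as the paper: parts (1) and (4) rest on the spectral invariance of Lemma~\ref{500}(1), and parts (2) and (3) follow by combining this with the partial-trace compatibility of Eq.~(\ref{27}). Your treatment of the negativity is in fact more careful than the paper's, which invokes Lemma~\ref{500}(1) directly without making explicit the intermediate step $[{\mathfrak L}_{m_1,n_1;m_2,n_2}(\rho_{m_1,m_2})]^{T_1}={\mathfrak L}_{m_1,n_1;m_2,n_2}(\rho_{m_1,m_2}^{T_1})$ that you correctly identify and verify.
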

\begin{proof}
These maps are compatible with the embeddings ${\mathfrak L}_{m_1m_2,n_1n_2}$. Indeed, for
$(m_1,m_2)|(n_1,n_2)$,
\begin{eqnarray}\label{310}
S_{n_1 n_2}[{\mathfrak L}_{m_1,n_1;m_2,n_2}(\rho _{m_1,m_2})]=S_{m_1,m_2}(\rho _{m_1,m_2})\nonumber\\
I_{n_1 n_2}[{\mathfrak L}_{m_1,n_1;m_2,n_2}(\rho _{m_1,m_2})]=I_{m_1,m_2}(\rho _{m_1,m_2})\nonumber\\
I'_{n_1|n_2}[{\mathfrak L}_{m_1,n_1;m_2,n_2}(\rho _{m_1,m_2})]=I'_{m_1|m_2}(\rho _{m_1,m_2})\nonumber\\
{\mathfrak N}_{n_1,n_2}[{\mathfrak L}_{m_1,n_1;m_2,n_2}(\rho _{m_1,m_2})]={\mathfrak N}_{m_1,m_2}
(\rho _{m_1,m_2})
\end{eqnarray}
The first and fourth of these equation is proved using the fact that the eigenvalues
of ${\mathfrak L}_{m_1,n_1;m_2,n_2}(\rho _{m_1,m_2})$ are the eigenvalues of $\rho _{m_1,m_2}$ plus
$n_1n_2-m_1m_2$ zeros (lemma \ref{500}). 
Then use of Eq.(\ref{27}) proves the second and third equation.
\end{proof}
Let $\mathbb S_2=\{S_{n_1,n_2}\;|\;n_1,n_2\in {\mathbb Z}\}$ be the set of the entropy maps, which we 
make a topological space with the divisor topology ${\mathfrak T}_{{\mathbb S}_2}$ 
(in a way analogous to that described in section \ref{topo1}).
${\mathbb S}_2$ is an ubiquitous quantity and therefore the topological space 
$(\mathbb S_2, {\mathfrak T}_{{\mathbb S}_2})$ 
describes entropy in a bipartite system and its subsystems and supersystems. 
The map ${\mathbb X}\times {\mathbb X}\;\rightarrow\;{\mathbb S}_2$ is continuous. 
Similar comments can be made for the other three quantities.

\section{Directed-complete partial orders}
\begin{definition}
A poset ${\mathbb A}$ in which every directed subset has a supremum, is a 
directed-complete partial order and is usually called dcpo.
\end{definition}
This plays an important role in domain theory\cite{D1,D2,D3} which has application in theoretical 
computer science.
The directed posets ${\mathbb X}$,  ${\mathfrak Z}$, ${\mathfrak H}$, ${\mathfrak R}$, $\Sigma$, 
are not dcpo. For example, the infinite chain $\{a,a^2,a^3,...\}$ where $a\in {\mathbb X}$, does 
not have a supremum. 
In this section we add `top elements' to the various directed posets and we make them dcpo.

\subsection{The dcpo ${\mathbb X}_1$}

${\mathbb X}_1$ is the set of supernatural (Steinitz) numbers:
\begin{eqnarray}
{\mathbb X}\subset {\mathbb X}_1=\left \{\prod p^{e_p}\;|\; p\in \Pi;\;\;\;\;
e_p\in {\mathbb Z}_0^+\cup \{\infty\}\right \}
\end{eqnarray}
If all $e_p \ne \infty$ and only a finite number of them are non-zero, 
then we get the natural numbers. 
We assume that at least one of the $e_p$ is non-zero, so that $1$ is not an element of ${\mathbb X}_1$.

Let $\pi$ be a subset of the set of prime numbers $\Pi$, and 
\begin{eqnarray}
\tau =\prod _{p\in \Pi} p^{\infty};\;\;\;\;\;
\tau (\pi)=\prod  _{p\in \pi}p^{\infty};\;\;\;\;\;\tau(\pi)|\tau
\end{eqnarray}
If $k, \ell \in {\mathbb X}_1$ we say that $k$ is a divisor of $\ell$,
when the corresponding exponents obey the relation $e_p(k)\le e_p(\ell)$, for all $p$.
This is a generalization of the usual concept of divisor.

Every element of ${\mathbb X}_1$ is a divisor of $\tau$ and therefore $\tau$ is `the top element' 
in ${\mathbb X}_1$.
Consequently ${\mathbb X}_1$ is a dcpo.
It is known that a partially order set is a dcpo if and only if each chain has a supremum\cite{D1,D2,D3}.
An example of this is the chain 
\begin{eqnarray}
p,p^2,p^3,...,p^\infty;\;\;\;\;p\in \Pi
\end{eqnarray}
which has the supernatural number $p^\infty$ as supremum.

\subsection{p-adic numbers and Pr\"ufer groups}

For later use, we summarize briefly known results about p-adic numbers, in order to establish the notation \cite{pro1,pro2,pro3}.
${\mathbb Q}_p$ is the set of p-adic numbers and ${\mathbb Z}_p$ the set of p-adic integers.
Also 
\begin{eqnarray}
\widehat {\mathbb Z}=\prod _{p\in \Pi}{\mathbb Z}_p;\;\;\;\;{\mathbb Q}/{\mathbb Z}=\prod _{p\in \Pi}{\mathbb Q}_p/{\mathbb Z}_p
\end{eqnarray}
${\mathbb Z}_p$ can be introduced as the inverse limit of the cyclic groups ${\mathbb Z}(p^n)$,
and $\widehat {\mathbb Z}$ as the inverse limit of the cyclic groups ${\mathbb Z}(n)$:
\begin{eqnarray}
\lim _{\longleftarrow}{\mathbb Z}(p^n)={\mathbb Z}_p;\;\;\;\;\;\lim _{\longleftarrow}{\mathbb Z}(n)=\widehat {\mathbb Z}\cong \prod _{p\in \Pi}{\mathbb Z}_p
\end{eqnarray}
Therefore both ${\mathbb Z}_p$ and $\widehat {\mathbb Z}$ are profinite groups.

${\mathbb Q}_p/{\mathbb Z}_p$ can be introduced as the direct limit of the cyclic groups ${\mathbb Z}(p^n)$,
and ${\mathbb Q}/{\mathbb Z}$ as the direct limit of the cyclic groups ${\mathbb Z}(n)$:
\begin{eqnarray}
\lim _{\longrightarrow}{\mathbb Z}(p^n)={\mathbb Q}_p/{\mathbb Z}_p;\;\;\;\;\;\lim _{\longrightarrow}{\mathbb Z}(n)={\mathbb Q}/{\mathbb Z}\cong \prod _{p\in \Pi}{\mathbb Q}_p/{\mathbb Z}_p.
\end{eqnarray}
The Pontryagin dual group of ${\mathbb Z}_p$ is ${\mathbb Q}_p/{\mathbb Z}_p$, and the
Pontryagin dual group of $\widehat {\mathbb Z}$ is ${\mathbb Q}/{\mathbb Z}$.

Let ${\cal C}(n)$ be the multiplicative group of the $n$-th roots of unity, which is isomorphic to the additive group ${\mathbb Z}(n)$:
\begin{eqnarray}\label{2s}
{\cal C}(n)=\{\omega _n(\alpha_n)|\alpha_n \in {\mathbb Z}(n)\}\cong {\mathbb Z}(n);\;\;\;\;n\in {\mathbb X}
\end{eqnarray}
We have the following factorization property
\begin{eqnarray}
n=\prod _{p\in \pi} p^{e_p}\;\;\rightarrow\;\;{\cal C}(n)=\prod _{p\in \pi} {\cal C}(p^{e_p})
\end{eqnarray}

We now extend this and define ${\cal C}(n)$ for all $n\in {\mathbb X}_1$.
The Pr\"ufer $p$-group ${\cal C}(p^ {\infty})$ contains 
all $p^n$-th roots of unity (for all $n \in {\mathbb Z}^+$) and it is isomorphic to ${\mathbb Q}_p/{\mathbb Z}_p$:
\begin{eqnarray}
{\cal C}(p^ {\infty})=\{\omega _n(\alpha_n)|\alpha_\ell \in {\mathbb Z}(p^n), n \in {\mathbb Z}^+\}\cong {\mathbb Q}_p/{\mathbb Z}_p
\end{eqnarray}
Its subgroups are the multiplicative cyclic groups ${\cal C}(p^n)$ (which are isomorphic to ${\mathbb Z}(p^n)$):
\begin{eqnarray}\label{2s}
{\cal C}(p)\le {\cal C}(p^2)\le ...\le {\cal C}(p^ {\infty})\cong {\mathbb Q}_p/{\mathbb Z}_p
\end{eqnarray}
More generally  the Pr\"ufer group ${\cal C}(\tau)$  is isomorphic to ${\mathbb Q}/{\mathbb Z}$:
\begin{eqnarray}
{\cal C}(\tau)=\prod _{p\in \Pi}{\cal C}(p^ {\infty})\cong \prod _{p\in \Pi}{\mathbb Q}_p/{\mathbb Z}_p \cong {\mathbb Q}/{\mathbb Z}
\end{eqnarray}
For any $n\in {\mathbb X}_1$, the ${\cal C}(n)$ is a subgroup of ${\cal C}(\tau)$.
An example is the 
\begin{eqnarray}
{\cal C}[\tau (\pi)]=\prod _{p\in \pi}{\cal C}(p^ {\infty})\cong \prod _{p\in \pi}{\mathbb Q}_p/{\mathbb Z}_p \le {\mathbb Q}/{\mathbb Z}
\end{eqnarray}

\subsection{The dcpo ${\mathfrak Z}_1$, $\Sigma _1$, ${\mathfrak H}_1$ and ${\mathfrak R}_1$}
We define the 
\begin{eqnarray}
{\mathfrak Z}\subset {\mathfrak Z}_1=\left \{{\cal C}(\alpha)\;|\; \alpha\in{\mathbb X}_1\right \}
\end{eqnarray}
All the elements in this set are subgroups of ${\cal C}(\tau)$ and  therefore ${\mathfrak Z}_1$ is a dcpo.
An example of a chain in this directed poset is the chain in Eq.(\ref{2s}) which has ${\cal C}(p^\infty)\cong {\mathbb Q}_p/{\mathbb Z}_p$ as its supremum.

We next consider a quantum system $\Sigma (\tau)$ with the position variable taking values in  
${\cal C}(\tau)\cong {\mathbb Q}/{\mathbb Z}$. We have studied such a system in ref\cite{V1}.
Its momenta take values in the Pontryagin dual group to ${\mathbb Q}/{\mathbb Z}$ which is $\widehat {\mathbb Z}$.

A subsystem of $\Sigma (\tau)$ is the system $\Sigma (p^\infty )$ with the position variable taking values in  
${\cal C}(p^\infty )\cong {\mathbb Q}_p/{\mathbb Z}_p$. We have studied such a system from both a mathematical and a physical point of view in in ref\cite{V2,V3}.
Its momenta take values in the Pontryagin dual group to ${\mathbb Q}_p/{\mathbb Z}_p$ which is ${\mathbb Z}_p$.
Another subsystem of $\Sigma(\tau)$ is the
$\Sigma[\tau (\pi)]$ which is the system with the position variable taking values in  
${\cal C}[\tau (\pi)]\cong \prod _{p\in \pi}{\mathbb Q}_p/{\mathbb Z}_p$ and
momenta taking values in the Pontryagin dual group which is $\prod_{p\in \pi}{\mathbb Z}_p$.
A summary of these systems is shown in table 1.

For any $n\in {\mathbb X}_1$, the $\Sigma(n)$ is a subsystem of $\Sigma(\tau)$.
We now consider the set of quantum systems
\begin{eqnarray}
{\Sigma}\subset {\Sigma }_1=\left \{{\Sigma}(\alpha)\;|\; \alpha\in{\mathbb X}_1\right \}
\end{eqnarray}
This is a dcpo.

In analogous way we extend the ${\mathfrak H}$ and ${\mathfrak R}$ which are not dcpo, 
into ${\mathfrak H}_1$ and ${\mathfrak R}_1$
correspondingly, which are dcpo. For example, ${\mathfrak H}_1$ will contain the space of the system $\Sigma (\tau)$
(which is described in detail in \cite{V3}).

\section{Discussion}

Using embeddings of various attributes of the system $\Sigma (m)$ into their counterparts in 
$\Sigma (n)$ (where $m|n$),
we have defined the concept `subsystem'. 
It is important that the various embeddings are compatible with each other, and we have 
shown that this is the case.
With `subsystem' as
partial order, the set of finite quantum systems $\{\Sigma (n)\}$ becomes a directed poset.

Not every quantity fits with this structure where smaller systems are embeddded into larger ones. The concept of
ubiquity aims to find the quantities that fit with this scheme.
An ubiquitous quantity has a unique value, for a state in $\Sigma (m)$ and for the corresponding state
in any of its supersystems $\Sigma (n)$ (where $m|n$).
A nonubiquitous quantity has local validity within a system $\Sigma (m)$ 
(and it needs to be recalculated if we consider its counterpart in one of its supersystems $\Sigma (n)$).
We have proved that the entropy (proposition \ref{P1}) and also the Wigner and Weyl functions are ubiquitous (proposition \ref{P2}).

We then introduced the divisor topology into the sets of various quantities. 
It is a $T_0$-topology and it encapsulates fundamental physical properties.
The open (resp. closed) set contains a quantity in some systems and in all their subsystems 
(resp. all their supersystems).
After we define topological spaces, we can define continous maps between them.

All these ideas have also been extended to bipartite systems. We have shown that various quantities 
used in the quantification of correlation and entanglement, are ubiquitous (proposition \ref{P3}).

This line of research can be extended to lattices and domains, in order to have a stronger link between
finite quantum systems and logic in the context of theoretical computation.
For this reason, in section 10, we have added `top elements' to the various posets and made them dcpo.

From a practical point of view,  
in nature there are genuine finite quantum systems (e.g., spins) and there are `pseudo-finite' quantum 
systems, which are truncations of infinite quantum systems.
An example is the Josephson qubit where the system operates in the two lowest states but it is really infinite-dimensional.
In such systems we can change the truncation point and go from $\Sigma (m)$ to $\Sigma (n)$, and then
the ideas of this paper become linked to practical problems.

\begin{table}[htbp]

\caption{Some quantum systems. The positions and momenta take values in the groups shown.}
\centering
\begin{tabular}{|c|c|c|}\hline
system & positions & momenta\\ \hline
$\Sigma (n)$ &${\mathbb Z}(n)$ & ${\mathbb Z}(n)$\\ \hline
$\Sigma (p^\infty) $&${\mathbb Q}_p/{\mathbb Z}_p$ & ${\mathbb Z}_p$\\ \hline
$\Sigma [\tau (\pi)] $&$\prod _{p\in \pi}{\mathbb Q}_p/{\mathbb Z}_p$ & $\prod _{p\in \pi}{\mathbb Z}_p$\\ \hline
$\Sigma (\tau)$ &${\mathbb Q}/{\mathbb Z}\cong \prod _{p\in \Pi}{\mathbb Q}_p/{\mathbb Z}_p$ & ${\widehat {\mathbb Z}}\cong \prod _{p\in \Pi}{\mathbb Z}_p$\\ \hline
\end{tabular}
\end{table}

\end{document}